\newtheorem{proposition}{Proposition}
\newtheorem{theorem}{Theorem}
\newtheorem{definition}{Definition}
\newtheorem{corollary}{Corollary}
\theoremstyle{definition}
\newtheorem{remark}{Remark}
\newcommand\Tr{\operatorname{Tr}}
\begin{document}

	\begin{center}
		\Large
		\textbf{On time-dependent projectors and on generalization of thermodynamical approach to open quantum systems}
		
		\large 
		\textbf{K.Sh. Meretukov}\footnote{Faculty of Physics, Lomonosov Moscow State University, Leninskie Gory, Moscow 119991, Russia\\
			E-mail:\href{mailto:violblink@gmail.com}{violblink@gmail.com}}
		\textbf{A.E. Teretenkov}\footnote{Department of Mathematical Methods for Quantum Technologies, Steklov Mathematical Institute of Russian Academy of Sciences, ul. Gubkina 8, Moscow 119991, Russia\\ E-mail:\href{mailto:taemsu@mail.ru}{taemsu@mail.ru}}
		\\[1mm]
	\end{center}
	
	\footnotesize
	In this paper, we develop a consistent perturbative technique for obtaining a time-local master equation based on projective methods in the case where the projector depends on time. We then introduce a generalization of the Kawasaki--Gunton projector, which allows us to use this technique to derive, generally speaking, nonlinear master equations in the case of arbitrary ansatzes consistent with some set of observables. Most of our results are very general, but in our discussion we focus on the application of these results to the theory of open quantum systems.
	\normalsize

	\section{Introduction}
	
	One of the most widely used approaches for deriving master equations in the theory of open quantum systems is the 	Nakajima--Zwanzig \cite{Nakajima1958, Zwanzig1960} projection method. In the case of the ''usual'' statement of the problem, when we are interested in the dynamics of the density matrix of an open system and it is considered to be ''slow'', the Argyres--Kelley projector is used \cite{Argyres1964}. Although in the theory of open quantum systems other projectors, which include   a part of the degrees of freedom of the environment \cite{Breuer2006, Breuer2007} or, conversely, only a part of the system density matrix \cite{Trushechkin2019}  in the ''slow'' dynamics are also used.

	However, all of the mentioned projectors are time-independent. At the same time, within the framework of nonequilibrium statistical physics a number of methods based on projectors which, in general, clearly depend on time \cite[Section 2C]{Zubarev1997}, have been developed, \cite{Rau1996}, in particular the Mori \cite{Mori1965}, Kawasaki--Gunton \cite{Kawasaki1973}, Robertson \cite{Robertson1966} projectors, and their variations and generalizations \cite{Bouchard2007}. Recently, the use of these methods, primarily based on the  Kawasaki--Gunton operator and its generalizations, in the theory of open quantum systems has also actively attracted the researchers' attention \cite{Semin2015, Semin2020}, although some discussion of application of these methods to open systems has arisen before \cite{Seke1980}. 
	
	In works on nonequilibrium statistical physics perturbative expansions with such projectors are  usually (see \cite[Section 17.3]{Fick1990}, \cite[Section 2.4.2]{Zubarev1997}) based on a generalization of the Nakajima--Zwanzig integro-differential equations to the case of a time-dependent projector. Although the ''classical'' derivation of these arrangements in the physical literature \cite[Section 3.3.1]{BreuerPetruccione2007} often implies that integro-differential equations are more accurate than time-local equations, some explicit examples \cite{Teretenkov2019Non} say this is not necessarily true. In addition, time-local equations are considered easier to solve and are more common in open quantum systems theory \cite{Shibata1977, Breuer1999, Breuer2001}. For these reasons, in Section \ref{sec:local} of this paper we focus specifically on the derivation of time-local master equations and their perturbative approximations. We also discuss the use of these results to obtain nonlinear time-local master equations.
	
	In Section \ref{sec:genKavGun} we focus on a special class of time-dependent projectors, namely, we develop a generalization of the  Kawasaki--Gunton projector. We propose to consider arbitrary ansatzes parameterized by averages from relevant observables. We study the properties of such a generalized Kawasaki--Gunton projector. Based on the results of Section \ref{sec:local}, we obtain explicit second-order equations with such a projector in the case of an initial condition consistent with the projector.
	
	In Section \ref{sec:examples} we give the simplest examples of the Kawasaki--Gunton projectors and generalized Kawasaki--Gunton projectors. In particular, we give an example where the Kawasaki--Gunton projector looks differently in the traditional approach in different parametrizations, but is reduced to the same, and time-independent, projector. Since the approach to open quantum systems in \cite{Semin2020} has been called thermodynamical, it is natural to call our approach as a generalization of the thermodynamical approach. We discuss in general how constant projectors arise as special cases of the Kawasaki--Gunton projector.  Furthermore, we give the simplest example of a Kawasaki-Gunton projector that is not reducible to a time-independent projector.
	
	In Section \ref{sec:examplesOfEq} we give examples of the second order equation obtained in~\ref{sec:genKavGun} in the case of projectors from Section \ref{sec:examples}. In particular, in the example of a time-independent projector we make an observation about the properties of these equations after the Bogolubov-van Hove scaling \cite{Bogoliubov1946, VanHove1954}, which can be interpreted as an opportunity to make an analog of the Wick rotation in open quantum systems theory. 
	
	In the Conclusion we summarize and propose a number of directions for further development of these results that we consider interesting.
	
	\section{Time-local equations with time-dependent projector}
	\label{sec:local}
	
	In this section we obtain time-local master equations for the projected dynamics in the case of a time-dependent projector, as well as their perturbative approximations in the form following \cite{Nestmann2019, Teretenkov2022, Karasev2023}. Namely, we generalize Theorem 1 of \cite{Karasev2023} to the case of a time-dependent projector.  In this paper, everywhere we consider only finite-dimensional matrices and mappings. For linear mappings between matrices we will use the term ''superoperator''. We will follow the physical tradition and use the word projector as a synonym of the word idempotent without assuming that the projector is selfadjoint with respect to some scalar product or other. In the case of an arbitrary superoperator $\mathcal{A}$ we will use the notation $\mathcal{A}^{(-1)}$ for a pseudo-inverse superoperator such that $\mathcal{A}^{(-1)} \mathcal{A} = \mathcal{P}$, $\mathcal{A}^{(-1)} \mathcal{Q} = \mathcal{Q} \mathcal{A}^{(-1)} = 0$, where $\mathcal{P} $ is the projector on the image of $\mathcal{A} $, and $\mathcal{Q} $ is the one on its kernel.

		We will assume that the function $\rho: [t_0, +\infty) \rightarrow \mathbb{C}^{d \times d} $, $d \in \mathbb{N}$, $t_0 \in \mathbb{R} $ satisfies the ordinary linear equation
		\begin{equation}\label{eq:basicDiffEq}
			\frac{d}{dt} \rho(t) = \lambda \mathcal{L}(t)\rho(t),
		\end{equation}
		where $\mathcal{L}(t)$ is a superoperator-valued function $ t\in [t_0, +\infty)$, meaning that for every fixed $ t$ the mapping $\mathcal{L}(t)$ is a linear mapping from $ \mathbb{C}^{d \times d}$ to $ \mathbb{C}^{d \times d}$, and $\lambda$ is a real parameter. We will also assume that $\mathcal{L}(t)$ is continuous at $ t\in [t_0, +\infty)$.
		
		To identify the degrees of freedom of the system for which we will write out the dynamics, we introduce the function $\mathcal{P}(t)$, which for every fixed $ t$ is a linear mapping from $ \mathbb{C}^{d \times d}$ to $ \mathbb{C}^{d \times d}$. We will assume that it is continuously differentiable at $ t\in [t_0, +\infty)$. Furthermore, we will assume that at each fixed $t$ the value of this function is idempotent, i.e., $ \mathcal{P}(t)= \mathcal{P}^2(t)$.  Let us denote $ \mathcal{Q}(t) \equiv I - \mathcal{P}(t)$, where $I$ is a unit superoperator. 
		
		\begin{theorem}\label{th:masterEquation}
			Let $\mathcal{U}_{t_0}^t$ be a superoperator-valued function that is a solution of the equation
			\begin{equation}\label{eq:propEq}
				\frac{d}{dt}\mathcal{U}_{t_0}^t = \lambda \mathcal{L}(t) \mathcal{U}_{t_0}^t, \qquad \mathcal{U}_{t_0}^{t_0} = I, 
			\end{equation}
			where $I$ is an identity mapping from $ \mathbb{C}^{d \times d}$ to $ \mathbb{C}^{d \times d}$, $\lambda$ is a real parameter. Then, when $\lambda$ is small enough (and $t$ is fixed), $\mathcal{P}(t) \rho(t)$ satisfies the linear kinetic equation
			\begin{equation}\label{eq:masterEquation}
				\frac{d}{dt} (\mathcal{P}(t) \rho(t)) = \mathcal{K}(t)\mathcal{P}(t) \rho(t) + \mathcal{I}(t) \mathcal{Q}(t) \rho(t_0) , 
			\end{equation}
			where
			
			\begin{align}
				\mathcal{K}(t) &\equiv \left( \dot{\mathcal{P}}(t) \mathcal{U}_{t_0}^t\mathcal{P}(t) +   \mathcal{P}(t)\dot{\mathcal{U}}_{t_0}^t \mathcal{P}(t)\right) (\mathcal{P}(t)\mathcal{U}_{t_0}^t \mathcal{P}(t))^{(-1)}, \label{eq:Kdef}\\
				\mathcal{I}(t) &\equiv \dot{\mathcal{P}}(t) \mathcal{U}_{t_0}^t\mathcal{Q}(t)  + \mathcal{P}(t)\dot{\mathcal{U}}_{t_0}^t \mathcal{Q}(t) -  \mathcal{K}(t)  \mathcal{P}(t)\mathcal{U}_{t_0}^t \mathcal{Q}(t). \label{eq:Idef}
			\end{align}
		\end{theorem}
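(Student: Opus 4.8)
The plan is to build everything on the variation-of-constants identity $\rho(t)=\mathcal{U}_{t_0}^t\rho(t_0)$, which holds because both sides solve the linear equation \eqref{eq:basicDiffEq} with the same value at $t_0$ and the coefficient $\lambda\mathcal{L}(t)$ is continuous, so the solution is unique. Throughout I would abbreviate $\mathcal{A}(t)\equiv\mathcal{P}(t)\mathcal{U}_{t_0}^t\mathcal{P}(t)$, so that $\mathcal{A}(t)^{(-1)}=(\mathcal{P}(t)\mathcal{U}_{t_0}^t\mathcal{P}(t))^{(-1)}$ is precisely the superoperator appearing in \eqref{eq:Kdef}--\eqref{eq:Idef}.

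The first, and only genuinely delicate, step is to justify that $\mathcal{A}(t)^{(-1)}$ is meaningful for small $\lambda$ and satisfies $\mathcal{A}(t)^{(-1)}\mathcal{A}(t)=\mathcal{A}(t)\mathcal{A}(t)^{(-1)}=\mathcal{P}(t)$. The initial value problem \eqref{eq:propEq} depends continuously on the parameter $\lambda$ (continuous dependence on parameters for a linear ODE with continuous coefficients), and at $\lambda=0$ one has $\mathcal{U}_{t_0}^t\equiv I$; hence, for the fixed $t$ under consideration and $\lambda$ small enough, $\mathcal{A}(t)$ restricted to the finite-dimensional subspace $\operatorname{im}\mathcal{P}(t)$ is close to the identity of that subspace, hence invertible on it. Since $\mathcal{P}(t)\mathcal{Q}(t)=\mathcal{Q}(t)\mathcal{P}(t)=0$, the flanking idempotents force $\operatorname{im}\mathcal{A}(t)\subseteq\operatorname{im}\mathcal{P}(t)$ and $\operatorname{im}\mathcal{Q}(t)\subseteq\ker\mathcal{A}(t)$; combined with bijectivity on $\operatorname{im}\mathcal{P}(t)$ and a dimension count, these inclusions are equalities, so $\mathcal{A}(t)^{(-1)}$ is exactly the inverse of $\mathcal{A}(t)$ on $\operatorname{im}\mathcal{P}(t)$ and zero on $\operatorname{im}\mathcal{Q}(t)$, with the properties stated in the Introduction.

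Next I would express $\mathcal{P}(t)\rho(t_0)$ through $\mathcal{P}(t)\rho(t)$. Inserting $I=\mathcal{P}(t)+\mathcal{Q}(t)$ after $\mathcal{U}_{t_0}^t$ in $\mathcal{P}(t)\rho(t)=\mathcal{P}(t)\mathcal{U}_{t_0}^t\rho(t_0)$ gives
\[
\mathcal{P}(t)\rho(t)=\mathcal{A}(t)\rho(t_0)+\mathcal{P}(t)\mathcal{U}_{t_0}^t\mathcal{Q}(t)\rho(t_0),
\]
and applying $\mathcal{A}(t)^{(-1)}$, using $\mathcal{A}(t)^{(-1)}\mathcal{A}(t)\rho(t_0)=\mathcal{P}(t)\rho(t_0)$, yields
\[
\mathcal{P}(t)\rho(t_0)=\mathcal{A}(t)^{(-1)}\mathcal{P}(t)\rho(t)-\mathcal{A}(t)^{(-1)}\mathcal{P}(t)\mathcal{U}_{t_0}^t\mathcal{Q}(t)\rho(t_0).
\]
Then I would differentiate $\mathcal{P}(t)\rho(t)=\mathcal{P}(t)\mathcal{U}_{t_0}^t\rho(t_0)$ (all factors are $C^1$ in $t$ by the hypotheses on $\mathcal{P}$ and by \eqref{eq:propEq}), obtaining $\frac{d}{dt}(\mathcal{P}(t)\rho(t))=(\dot{\mathcal{P}}(t)\mathcal{U}_{t_0}^t+\mathcal{P}(t)\dot{\mathcal{U}}_{t_0}^t)\rho(t_0)$, insert $I=\mathcal{P}(t)+\mathcal{Q}(t)$ before the trailing $\rho(t_0)$, and use $\mathcal{P}^2=\mathcal{P}$, $\mathcal{Q}^2=\mathcal{Q}$ to replace that $\rho(t_0)$ by $\mathcal{P}(t)\rho(t_0)$ in the $\mathcal{P}$-terms and by $\mathcal{Q}(t)\rho(t_0)$ in the $\mathcal{Q}$-terms:
\[
\frac{d}{dt}(\mathcal{P}(t)\rho(t))=\bigl(\dot{\mathcal{P}}(t)\mathcal{U}_{t_0}^t\mathcal{P}(t)+\mathcal{P}(t)\dot{\mathcal{U}}_{t_0}^t\mathcal{P}(t)\bigr)\mathcal{P}(t)\rho(t_0)+\bigl(\dot{\mathcal{P}}(t)\mathcal{U}_{t_0}^t\mathcal{Q}(t)+\mathcal{P}(t)\dot{\mathcal{U}}_{t_0}^t\mathcal{Q}(t)\bigr)\mathcal{Q}(t)\rho(t_0).
\]

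Finally, I would substitute the formula for $\mathcal{P}(t)\rho(t_0)$ into the first bracket. The part acting on $\mathcal{A}(t)^{(-1)}\mathcal{P}(t)\rho(t)$ is, by the very definition \eqref{eq:Kdef}, equal to $\mathcal{K}(t)\mathcal{P}(t)\rho(t)$; the remaining contribution $-\bigl(\dot{\mathcal{P}}(t)\mathcal{U}_{t_0}^t\mathcal{P}(t)+\mathcal{P}(t)\dot{\mathcal{U}}_{t_0}^t\mathcal{P}(t)\bigr)\mathcal{A}(t)^{(-1)}\mathcal{P}(t)\mathcal{U}_{t_0}^t\mathcal{Q}(t)\rho(t_0)=-\mathcal{K}(t)\mathcal{P}(t)\mathcal{U}_{t_0}^t\mathcal{Q}(t)\rho(t_0)$ combines with the second bracket, and reinserting one factor $\mathcal{Q}(t)$ via $\mathcal{Q}^2=\mathcal{Q}$ gives precisely $\mathcal{I}(t)\mathcal{Q}(t)\rho(t_0)$ with $\mathcal{I}(t)$ as in \eqref{eq:Idef}. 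This is \eqref{eq:masterEquation}. Apart from the invertibility argument of the second step (which is also where ``$\lambda$ small enough, $t$ fixed'' enters, and which need not be uniform in $t$), the whole derivation is just careful bookkeeping with the idempotents $\mathcal{P}(t)$ and $\mathcal{Q}(t)$; I expect that bookkeeping, rather than any conceptual difficulty, to be the main place where care is required.
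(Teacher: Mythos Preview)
Your proof is correct and follows essentially the same route as the paper: both differentiate $\mathcal{P}(t)\mathcal{U}_{t_0}^t$, split via $I=\mathcal{P}(t)+\mathcal{Q}(t)$, and use $(\mathcal{P}\mathcal{U}\mathcal{P})^{(-1)}(\mathcal{P}\mathcal{U}\mathcal{P})=\mathcal{P}$ to produce $\mathcal{K}(t)$ and $\mathcal{I}(t)$; you merely organize the algebra by first solving for $\mathcal{P}(t)\rho(t_0)$ and then substituting, whereas the paper inserts $\mathcal{A}^{(-1)}\mathcal{A}$ directly at the superoperator level before applying to $\rho(t_0)$. Your explicit small-$\lambda$ argument that $\operatorname{im}\mathcal{A}(t)=\operatorname{im}\mathcal{P}(t)$ and $\ker\mathcal{A}(t)=\operatorname{im}\mathcal{Q}(t)$ is a welcome addition, as the paper takes this for granted.
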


	\begin{proof}
		By direct calculation we obtain 
		\begin{align*}
			\frac{d}{dt}( \mathcal{P}(t)\mathcal{U}_{t_0}^t)  =&  \dot{\mathcal{P}}(t) \mathcal{U}_{t_0}^t + \mathcal{P}(t) \dot{\mathcal{U}}_{t_0}^t  \\
			=&  \dot{\mathcal{P}}(t) \mathcal{U}_{t_0}^t  \mathcal{P}(t)  +  \dot{\mathcal{P}}(t)  \mathcal{U}_{t_0}^t   \mathcal{Q}(t)+ \mathcal{P}(t) \dot{\mathcal{U}}_{t_0}^t  \mathcal{P}(t) +  \mathcal{P}(t) \dot{\mathcal{U}}_{t_0}^t  \mathcal{Q}(t) \\
			=&   \left(\dot{\mathcal{P}}(t) \mathcal{U}_{t_0}^t \mathcal{P}(t) + \mathcal{P}(t) \dot{\mathcal{U}}_{t_0}^t  \mathcal{P}(t)\right) (\mathcal{P}(t)\mathcal{U}_{t_0}^t \mathcal{P}(t))^{(-1)} \mathcal{P}(t)\mathcal{U}_{t_0}^t \mathcal{P}(t)   \\
			&  +  \dot{\mathcal{P}}(t)  \mathcal{U}_{t_0}^t  \mathcal{Q}(t) + \mathcal{P}(t)\dot{\mathcal{U}}_{t_0}^t \mathcal{Q}(t) \\
			=& \mathcal{K}(t) \mathcal{P}(t)\mathcal{U}_{t_0}^t \mathcal{P}(t)  +  \dot{\mathcal{P}}(t)  \mathcal{U}_{t_0}^t  \mathcal{Q}(t) + \mathcal{P}(t)\dot{\mathcal{U}}_{t_0}^t \mathcal{Q}(t)\\
			=& \mathcal{K}(t) \mathcal{P}(t)\mathcal{U}_{t_0}^t  +	\mathcal{I}(t) \mathcal{Q}(t) .
		\end{align*}
		Multiplying by $\rho(t_0)$, we have
		\begin{equation}\label{eq:masterEquationAlmost}
			\frac{d}{dt}( \mathcal{P}(t)\mathcal{U}_{t_0}^t \rho(t_0))   = \mathcal{K}(t) \mathcal{P}(t)\mathcal{U}_{t_0}^t\rho(t_0)  +	\mathcal{I}(t) \mathcal{Q}(t)\rho(t_0).
		\end{equation}
		The solution of equation~\eqref{eq:basicDiffEq} with a given initial condition $\rho(t_0)$ is unique \cite[Theorem 5.2]{Coddington1955} and can be represented in the form $\rho(t) = \mathcal{U}_{t_0}^t\rho(t_0)$, where $\mathcal{U}_{t_0}^t$ is defined by formula \eqref{eq:propEq}, so from \eqref{eq:masterEquationAlmost} we get \eqref{eq:masterEquation}.
	\end{proof}
	
	The integro-differential equation of the Nakajima--Zwanzig type in the case of a time-dependent projector of general form was considered in \cite{Fick1990} and therefore is sometimes called the Fick-Sauermann equation \cite{Kato2004}. This expression, is a time-local (time-convolutionless) analogue of the Fick-Sauermann equation. Moreover, like \cite[Section 9.2]{BreuerPetruccione2007} it could be derived by eliminating the time convolution from the Fick-Sauermann equation, but we follow the derivation from \cite{Karasev2023} which does not use the integro-differential equation as an intermediate step at all.
	
	\begin{corollary}\label{cor:pertExp}
		For  $\mathcal{K}(t) $ at  $\lambda \rightarrow 0 $ the following asymptotic expansion holds
		\begin{equation}\label{eq:expansionOfK}
			\mathcal{K}(t) = \sum_{n=1}^{\infty} \lambda^n \mathcal{K}_n(t),
		\end{equation}
		whose coefficients are determined by the formulas
		\begin{equation}\label{eq:coeffKn}
			\mathcal{K}_n(t) =  \sum_{q=0}^{n-1} (-1)^q \sum_{\sum_{j=0}^q k_j =n, k_j \geqslant 1}  \check{\mathcal{M}}_{k_0}(t) \mathcal{M}_{k_1}(t)  \ldots \mathcal{M}_{k_{q}}(t)
		\end{equation}
		(the sum is taken over all compositions of the number $n$), where
		\begin{align*}
			\mathcal{M}_{k}(t) \equiv& \int_{t_0}^t d t_1  \ldots  \int_{t_0}^{t_{k-1}} d t_k  \mathcal{P}(t)  \mathcal{L}(t_1) \ldots \mathcal{L}(t_k)\mathcal{P}(t),  \\
			\check{\mathcal{M}}_{k}(t) \equiv& \int_{t_0}^t d t_1  \ldots  \int_{t_0}^{t_{k-1}} d t_k  \dot{\mathcal{P}}(t)  \mathcal{L}(t_1) \ldots \mathcal{L}(t_k)\mathcal{P}(t) \\
			& + \int_{t_0}^t d t_1  \ldots  \int_{t_0}^{t_{k-2}} d t_{k-1}  \mathcal{P}(t) \mathcal{L}(t) \mathcal{L}(t_1) \ldots \mathcal{L}(t_{k-1})\mathcal{P}(t),
		\end{align*}
		and for $\mathcal{I}(t)$ at $\lambda \rightarrow 0 $ the asymptotic expansion holds
		\begin{equation}\label{eq:expansionOfI}
			\mathcal{I}(t)  = \sum_{n=1}^{\infty} \lambda^n \mathcal{I}_n(t),
		\end{equation}
		whose coefficients are determined by the formulas
		\begin{equation}\label{eq:coeffIn}
			\mathcal{I}_n(t) = \check{\tilde{\mathcal{M}}}_{n}(t) +  \sum_{q=1}^{n-1} (-1)^q \sum_{\sum_{j=0}^q k_j =n, k_j \geqslant 1}  \check{\mathcal{M}}_{k_0}(t) \mathcal{M}_{k_1}(t)  \ldots \mathcal{M}_{k_{q-1}}(t)\tilde{ \mathcal{M}}_{k_{q}}(t),
		\end{equation}
		where
		\begin{align*}
			\tilde{\mathcal{M}}_k(t) \equiv& \int_{t_0}^t d t_1  \ldots  \int_{t_0}^{t_{k-1}} d t_k  \mathcal{P}(t)  \mathcal{L}(t_1) \ldots \mathcal{L}(t_k)\mathcal{Q}(t),\\
			\check{\tilde{\mathcal{M}}}_{k}(t) \equiv& \int_{t_0}^t d t_1  \ldots  \int_{t_0}^{t_{k-1}} d t_k  \dot{\mathcal{P}}(t)  \mathcal{L}(t_1) \ldots \mathcal{L}(t_k)\mathcal{Q}(t) \\
			& + \int_{t_0}^t d t_1  \ldots  \int_{t_0}^{t_{k-2}} d t_{k-1}  \mathcal{P}(t) \mathcal{L}(t) \mathcal{L}(t_1) \ldots \mathcal{L}(t_{k-1})\mathcal{Q}(t).
		\end{align*}
	\end{corollary}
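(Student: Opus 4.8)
The plan is to reduce the whole statement to an algebra of power series in $\lambda$ and then read off the coefficient of $\lambda^n$ in the closed forms \eqref{eq:Kdef} and \eqref{eq:Idef}; the three ingredients are the Dyson series for $\mathcal{U}_{t_0}^t$, a Neumann-type series for the pseudo-inverse, and a combinatorial collection of Cauchy products.

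First I would Dyson-expand the propagator. Iterating the Volterra form $\mathcal{U}_{t_0}^{t}=I+\lambda\int_{t_0}^{t}\mathcal{L}(s)\,\mathcal{U}_{t_0}^{s}\,ds$ of \eqref{eq:propEq} yields, for each fixed $t$, the series (norm-convergent for every $\lambda$ by the usual Gronwall-type bound using continuity of $\mathcal{L}$)
\[
\mathcal{U}_{t_0}^{t}=I+\sum_{k\geqslant 1}\lambda^{k}\int_{t_0}^{t}dt_1\int_{t_0}^{t_1}dt_2\cdots\int_{t_0}^{t_{k-1}}dt_k\;\mathcal{L}(t_1)\cdots\mathcal{L}(t_k),
\]
and, since $\dot{\mathcal{U}}_{t_0}^{t}=\lambda\mathcal{L}(t)\mathcal{U}_{t_0}^{t}$, an analogous series for $\dot{\mathcal{U}}_{t_0}^{t}$ with leading term $\lambda\mathcal{L}(t)$ and generic term $\lambda^{k+1}\int\cdots\mathcal{L}(t)\mathcal{L}(t_1)\cdots\mathcal{L}(t_k)$. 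Sandwiching between $\mathcal{P}(t)$, $\dot{\mathcal{P}}(t)$, $\mathcal{Q}(t)$ and using only $\mathcal{P}(t)^2=\mathcal{P}(t)$ and $\mathcal{P}(t)\mathcal{Q}(t)=0$, one reads off
\[
\mathcal{P}(t)\mathcal{U}_{t_0}^{t}\mathcal{P}(t)=\mathcal{P}(t)+\sum_{k\geqslant 1}\lambda^{k}\mathcal{M}_k(t),\qquad
\mathcal{P}(t)\mathcal{U}_{t_0}^{t}\mathcal{Q}(t)=\sum_{k\geqslant 1}\lambda^{k}\tilde{\mathcal{M}}_k(t),
\]
\[
\dot{\mathcal{P}}(t)\mathcal{U}_{t_0}^{t}\mathcal{P}(t)+\mathcal{P}(t)\dot{\mathcal{U}}_{t_0}^{t}\mathcal{P}(t)=\dot{\mathcal{P}}(t)\mathcal{P}(t)+\sum_{k\geqslant 1}\lambda^{k}\check{\mathcal{M}}_k(t),
\]
and likewise with $\mathcal{Q}(t)$ on the right and $\check{\tilde{\mathcal{M}}}_k$. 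The one point that needs care is that the second summand of $\check{\mathcal{M}}_k$ (and of $\check{\tilde{\mathcal{M}}}_k$), carrying $\mathcal{P}(t)\mathcal{L}(t)\mathcal{L}(t_1)\cdots\mathcal{L}(t_{k-1})\mathcal{P}(t)$, comes from the order $\lambda^{k-1}$ term of the Dyson series for $\mathcal{U}_{t_0}^{t}$ multiplied by the explicit $\lambda$ in $\dot{\mathcal{U}}_{t_0}^{t}=\lambda\mathcal{L}(t)\mathcal{U}_{t_0}^{t}$; the index shift $k\mapsto k-1$ and the degenerate case $k=1$ (empty integrations, bare $\mathcal{P}(t)\mathcal{L}(t)\mathcal{P}(t)$) must be matched by hand.

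Next I would expand the pseudo-inverse. Setting $\mathcal{N}(t):=\sum_{k\geqslant 1}\lambda^{k}\mathcal{M}_k(t)$, each $\mathcal{M}_k(t)$ begins and ends with $\mathcal{P}(t)$, so $\mathcal{P}(t)\mathcal{N}(t)=\mathcal{N}(t)\mathcal{P}(t)=\mathcal{N}(t)$ and $\mathcal{N}(t)\mathcal{Q}(t)=\mathcal{Q}(t)\mathcal{N}(t)=0$, while $\|\mathcal{N}(t)\|<1$ for $\lambda$ small enough. Hence $\mathcal{P}(t)\mathcal{U}_{t_0}^{t}\mathcal{P}(t)=\mathcal{P}(t)+\mathcal{N}(t)$ annihilates $\operatorname{im}\mathcal{Q}(t)$ and is invertible on $\operatorname{im}\mathcal{P}(t)$, so its image and kernel are exactly $\operatorname{im}\mathcal{P}(t)$ and $\operatorname{im}\mathcal{Q}(t)$; therefore, in the precise sense of $(\,\cdot\,)^{(-1)}$ fixed at the start of the section,
\[
\bigl(\mathcal{P}(t)\mathcal{U}_{t_0}^{t}\mathcal{P}(t)\bigr)^{(-1)}=\sum_{q\geqslant 0}(-1)^{q}\,\mathcal{N}(t)^{q},\qquad \mathcal{N}(t)^{0}:=\mathcal{P}(t),
\]
which one checks is the pseudo-inverse by multiplying by $\mathcal{P}(t)+\mathcal{N}(t)$, using $\mathcal{N}(t)^{q}\mathcal{P}(t)=\mathcal{N}(t)^{q}$ and telescoping to $\mathcal{P}(t)$, the annihilation identities with $\mathcal{Q}(t)$ being immediate.

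Finally I would substitute all of the above into \eqref{eq:Kdef} and \eqref{eq:Idef}, expand the Cauchy products, and collect $\lambda^{n}$. For $\mathcal{K}(t)=\bigl(\dot{\mathcal{P}}(t)\mathcal{P}(t)+\sum_k\lambda^{k}\check{\mathcal{M}}_k(t)\bigr)\sum_q(-1)^q\mathcal{N}(t)^q$ the order-$\lambda^n$ part is precisely the sum over $q\geqslant 0$ and over compositions $k_0+\dots+k_q=n$ with all $k_j\geqslant 1$ of $(-1)^q\check{\mathcal{M}}_{k_0}(t)\mathcal{M}_{k_1}(t)\cdots\mathcal{M}_{k_q}(t)$, the constraint $k_j\geqslant 1$ forcing $q\leqslant n-1$, which is \eqref{eq:coeffKn}; for $\mathcal{I}(t)$ one additionally expands $\mathcal{K}(t)\,\mathcal{P}(t)\mathcal{U}_{t_0}^{t}\mathcal{Q}(t)$ by feeding in the already-established series \eqref{eq:expansionOfK} together with $\mathcal{P}(t)\mathcal{U}_{t_0}^{t}\mathcal{Q}(t)=\sum_k\lambda^k\tilde{\mathcal{M}}_k(t)$, which reproduces the $q\geqslant 1$ terms of \eqref{eq:coeffIn} with $\check{\tilde{\mathcal{M}}}_n(t)$ as the $q=0$ term, after which one checks that the $\lambda$-independent pieces $\dot{\mathcal{P}}(t)\mathcal{P}(t)$ and $\dot{\mathcal{P}}(t)\mathcal{Q}(t)$ are consistent with the ranges of summation in \eqref{eq:expansionOfK}--\eqref{eq:expansionOfI}. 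I expect the \textbf{main obstacle} to be twofold: making the pseudo-inverse step genuinely rigorous, i.e.\ confirming that the Neumann-type series represents $(\,\cdot\,)^{(-1)}$ in the stipulated sense (correct image and kernel projectors $\mathcal{P}(t)$, $\mathcal{Q}(t)$) uniformly for $\lambda$ near $0$ and with enough regularity in $t$ to differentiate termwise; and the combinatorial bookkeeping in the last step, namely recognising the nested composition sums and correctly absorbing the index shift coming from the $\mathcal{P}(t)\dot{\mathcal{U}}_{t_0}^{t}$ terms and the degenerate low-order cases.
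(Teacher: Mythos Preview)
Your approach is correct and is essentially a spelled-out version of what the paper does by citation: the paper's own proof is only a few lines, stating that the result for time-independent $\mathcal{P}$ (Theorem~1 of \cite{Karasev2023}) carries over after the formal substitution $\mathcal{P}\dot{\mathcal{U}}_{t_0}^t\mathcal{P}\to\dot{\mathcal{P}}(t)\mathcal{U}_{t_0}^t\mathcal{P}(t)+\mathcal{P}(t)\dot{\mathcal{U}}_{t_0}^t\mathcal{P}(t)$, which at the level of Dyson coefficients amounts to $\dot{\mathcal{M}}_k\to\check{\mathcal{M}}_k$. Your Dyson series / Neumann series / Cauchy-product scheme is exactly that argument made explicit, and your identification of the image and kernel of $\mathcal{P}(t)+\mathcal{N}(t)$ to justify the pseudo-inverse as a geometric series is a detail the paper does not mention at all.

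One point you flag but do not resolve deserves comment. The ``$\lambda$-independent pieces'' you propose to check are \emph{not} automatically consistent with the stated ranges of summation: at $\lambda=0$ one has $\mathcal{U}_{t_0}^t=I$ and $\dot{\mathcal{U}}_{t_0}^t=0$, so directly from \eqref{eq:Kdef} one finds $\mathcal{K}(t)|_{\lambda=0}=\dot{\mathcal{P}}(t)\mathcal{P}(t)$, which need not vanish for a generic time-dependent projector. Thus the expansion \eqref{eq:expansionOfK} as written (starting at $n=1$) omits a zeroth-order term $\dot{\mathcal{P}}(t)\mathcal{P}(t)$, and the products $\dot{\mathcal{P}}(t)\mathcal{P}(t)\cdot(-\mathcal{N})^q$ for $q\geqslant 1$ contribute additional summands at each order $n\geqslant 1$ not captured by \eqref{eq:coeffKn}; the analogous remark applies to $\mathcal{I}(t)$ via $\dot{\mathcal{P}}(t)\mathcal{Q}(t)$. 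This is not a defect of your method but an omission in the statement itself; the paper's proof, by deferring to the constant-$\mathcal{P}$ case and a formal substitution, never confronts it. In the paper's applications the issue is moot: immediately after the corollary the authors restrict to Robertson dynamics ($\dot{\mathcal{P}}(t)\rho(t)=0$), where the $\dot{\mathcal{P}}$-contributions drop out once the master equation is applied to the actual trajectory, and this is the only setting in which the corollary is subsequently used.
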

	
	\begin{proof}
		For the most part, the proof follows the proof of Theorem 1 from \cite{Karasev2023}, so we will cite only those places where there are differences. In the case of a constant projection $\mathcal{P}(t) = \mathcal{P}$, equation \eqref{eq:Kdef} reduces to equation (12) from \cite{Karasev2023} and can be obtained from it by the formal substitution $\mathcal{P}\dot{\mathcal{U}}_{t_0}^t \mathcal{P} \rightarrow \dot{\mathcal{P}}(t) \mathcal{U}_{t_0}^t\mathcal{P}(t) + \mathcal{P}(t)\dot{\mathcal{U}}_{t_0}^t \mathcal{P}(t) $. Taking into account the expansion of $\mathcal{U}_{t_0}^t$ into a Dyson series, which is also called the Peano-Baker series  in the case of finite-dimensional matrices  \cite{Baake2011},
		\begin{equation*}
			\mathcal{U}_{t_0}^t = \sum_{k=0}^{\infty} \lambda^k \int_{t_0}^t d t_1  \ldots  \int_{t_0}^{t_{k-1}} d t_k \mathcal{L}(t_1) \ldots \mathcal{L}(t_k)
		\end{equation*}
		the terms of the expansion $\mathcal{P}\dot{\mathcal{U}}_{t_0}^t \mathcal{P}$ in the Maclaurin series are replaced by the corresponding terms of the expansion $ \dot{\mathcal{P}}(t) \mathcal{U}_{t_0}^t\mathcal{P}(t) + \mathcal{P}(t)\dot{\mathcal{U}}_{t_0}^t \mathcal{P}(t )$: $ \dot{\mathcal{M}}_{k_0}(t)|_{\mathcal{P}(t) = \mathcal{P}} \rightarrow \check{\mathcal{M}}_ {k_0}(t) $. After such a replacement, formula (3) from \cite{Karasev2023} will take form \eqref{eq:coeffKn}. Expressions \eqref{eq:coeffIn} can be obtained similarly. 
	\end{proof}
	
	Dynamics with a time-dependent projector of the general form is quite rare in the literature. First of all, the case of the so-called Robertson dynamics \cite{Kato2000} is used. Namely, if $\rho(t)$ and $\mathcal{P}(t)$ at $t \in [t_0, + \infty)$ are such that
	\begin{equation}\label{eq:RobertsonType}
		\dot{\mathcal{P}}(t) \rho(t) = 0,
	\end{equation}
	then $\mathcal{P}(t) \rho(t)$ is said to evolve according to Robertson dynamics. In this case, $\dot{\mathcal{P}}(t) \rho(t) = \dot{\mathcal{P}}(t)\mathcal{U}_{t_0}^t \rho(t_0)$, so the terms including $\dot{\mathcal{P}}(t)$ in \eqref{eq:Kdef}--\eqref{eq:Idef} disappear. Similarly, in that case the terms $\check{\mathcal{M}}_{k}(t) $ and $\check{\tilde{\mathcal{M}}}_{k}(t) $ take on the form as if we had differentiated $\mathcal{M}_{k}(t)$ and $\tilde{\mathcal{M}}_{k}(t)$, respectively, but considered the projector $\mathcal{P}(t)$ independent of time $ t $. 
	
	Usually, for condition \eqref{eq:RobertsonType} to be satisfied, the dependence $\mathcal{P}(t)$ cannot be chosen ''globally'' such that it is satisfied for $\rho(t)$ whose dynamics is determined by different $\mathcal{L}(t) $. Instead, it is chosen as a composite function of $\rho(t)$ itself
	\begin{equation}\label{eq:timeDepAndNonLin}
		\mathcal{P}(t) = \mathcal{P}_{NL}(\rho(t)),
	\end{equation}
	where $\mathcal{P}_{NL}(\rho)$ is a function that is the nonlinear analog of the projector from the ''standard'' scheme of deriving master equations in open quantum systems theory \cite[Section 9]{BreuerPetruccione2007}. However, this is what time-dependent projection methods in nonequilibrium statistical physics are typically used for \cite[Section 17.3.2]{Fick1990}, \cite{Los2022}: They are interesting not by themselves, but as an intermediate ''trick'' for deriving nonlinear equations.
	
	In the next section we will also develop a similar approach by considering an important example where \eqref{eq:RobertsonType} is satisfied. However, we note that equations \eqref{eq:Kdef}, \eqref{eq:Idef} with the projector directly dependent on time, and without condition \eqref{eq:RobertsonType} may also be important in the theory of open quantum systems. In particular, in the case of the tensor product of two Hilbert spaces $\mathcal{H}_S \otimes \mathcal{H}_B$ it is quite natural to consider a time-dependent Argyres--Kelley projector of the form
	\begin{equation}\label{eq:AKproj}
		\mathcal{P}_{AK}(t) X = \Tr_B X \otimes \rho_B(t),
	\end{equation}
	where $\rho_B(t)$ is the density matrix in $\mathcal{H}_B$ in given time-dependent form to derive time-dependent master equations which are important in modern quantum theory problems of incoherent control \cite{Lokutsievskiy2021, Morzhin2021, Petruhanov2023}. In particular, in \cite{Accardi2006} the authors actually consider a reservoir that is in a time-dependent coherent state. However, such dynamics is assumed to be slow in the interaction representation, allowing $\dot{\mathcal{P}}_{AK}(t)$ to be neglected. However, there is also interest in the literature situations where such an approximation is not fulfilled \cite{Szczygielski2013, Mori2023}.

	\section{Generalized Kawasaki--Gunton projector}
	\label{sec:genKavGun}
	
	The definition of the Kawasaki--Gunton projector explicitly includes a family of Gibbs-type operators. We generalize the construction of the Kawasaki--Gunton projector based on the following definition.
	
	\begin{definition}
		Let a finite number of self-adjoint matrices $P_m \in \mathbb{C}^{d \times d}$, $m=1, \ldots, M$ linearly independent of each other and the identity matrix be given. These matrices will be called relevant observables. Let us also have a family $d \times d$ of matrices $\rho_{ans}(\vec{E})$, continuously differentiable dependent on the parameters $\vec{E}$ belonging to the domain in $\mathbb{R}^M$, and satisfying the conditions
		\begin{equation}\label{eq:consistCond}
			\Tr P_m \rho_{ans}(\vec{E}) = E_m,
		\end{equation}
		which we call the consistency conditions. This family $\rho_{ans}(\vec{E})$ will be called ansatz consistent with relevant observables $P_m$. 
	\end{definition}
	
	It will be convenient for us to introduce a vector $\vec{P}$ consisting of operators $P_m$. Then consistency conditions \eqref{eq:consistCond} will be written as
	\begin{equation}\label{eq:consistCondVec}
		\Tr \vec{P} \rho_{ans}(\vec{E}) = \vec{E},
	\end{equation}
	where multiplication of vector with matrix elements by matrix and trace taking are understood in an element-by-element sense.
	
	The standard definition of the Kawasaki--Gunton projector is based \cite{Kato2000} on a family of Gibbs distributions of the form
	\begin{equation}\label{eq:GibbsFamily}
		\rho_{Gibbs}(\vec{\beta}) = \frac{e^{- (\vec{\beta}, \vec{P})}}{Z(\vec{\beta})}, \qquad Z(\vec{\beta}) \equiv \Tr e^{- (\vec{\beta}, \vec{P})},
	\end{equation}
	where $ (\vec{\beta}, \vec{P}) \equiv \sum_{m} \beta_m P_m$. Then by solving the system 
	\begin{equation}\label{eq:consistCondGibbs}
		\Tr \vec{P} \rho_{Gibbs}(\vec{\beta}) = \vec{E}
	\end{equation}
	with respect to $\vec{\beta}$, where $\vec{E}$ acts as a parameter set, a function $\vec{\beta}(\vec{E})$ is constructed and the Gibbs family of the \eqref{eq:GibbsFamily}, reparameterized with parameters $\vec{E}$, acts as the ansatz consistent with the set $\vec{P}$, that is in this case
	\begin{equation*}
		\rho_{ans}(\vec{E}) = \rho_{Gibbs}(\vec{\beta}(\vec{E})).
	\end{equation*}
	The fact that $\vec{\beta}(\vec{E})$ is defined by system \eqref{eq:consistCondGibbs} automatically ensures that the consistency conditions of \eqref{eq:consistCond} are met. In addition, system \eqref{eq:consistCondGibbs} can be rewritten as
	\begin{equation*}
		\vec{E} = -\frac{\partial}{\partial \vec{\beta}} \ln Z(\vec{\beta}),
	\end{equation*}
	using only the partition function $ Z(\vec{\beta})$, which sometimes simplifies the calculation.
	
	Note that although in statistical physics the $P_m$ are often chosen as commuting observables, such as the Hamiltonian and the number of particles of a many-particle system, the situation where such relevant observables do not commute is also of sufficient interest. In particular, they arise on the Jaynes principle \cite{Jaynes1957} as states with maximum von Neumann entropy with fixed averages of the relevant observables and play an important role in modern quantum thermodynamics problems \cite{Mingo2018}. In \cite{Semin2020}  families of operators corresponding to the maximization of Renyi \cite{Baskirove2006, Bakiev2020} entropy under the same conditions were also considered as $\rho_{ans}(\vec{E}) $:
	\begin{equation}\label{eq:RenyiAns}
		\rho_{Renyi, q}(\vec{E}) = \frac{1}{Z_q(\vec{E})} \left(1 + \frac{q-1}{q} (\vec{\beta}(\vec{E}),\vec{E} -\vec{P} )\right)^{\frac{1}{q-1}}, 
	\end{equation}
	where $q$ is the Renyi entropy parameter, $Z_q(\vec{E})$ is defined by the normalization condition
	\begin{equation*}
		\Tr \rho_{Renyi, q}(\vec{E}) = 1,
	\end{equation*}
	and the functions $\vec{\beta}(\vec{E})$ are chosen such that consistency conditions \eqref{eq:consistCondVec} are met.

	Often the Kawasaki--Gunton projector is immediately defined as being explicitly time dependent. However, this dependence is essentially not explicit and depends on the dynamics of the density matrix. Therefore, let us define the Kawasaki--Gunton projector in 3 steps to make its definition, in our opinion, clearer. 
	
	\begin{definition}\label{def:KGprojector}

		\begin{enumerate}
			Let an ansatz $ \rho_{ans}(\vec{E})$ consistent with the relevant observables $\vec{P}$ be given. 
			\item Introduce a family of (linear) superoperators $\mathcal{P}_{KG,par}(\vec{E})$, also parameterized by vector~$\vec{E}$, which act on an arbitrary matrix $X$ by the formula
			\begin{equation}\label{eq:KGperDef}
				\mathcal{P}_{KG,par}(\vec{E}) X \equiv \rho_{ans}(\vec{E}) \Tr X +\left(\Tr (X \vec{P})- (\Tr X) \vec{E},\frac{\partial \rho_{ans} (\vec{E})}{\partial \vec{E}}\right).
			\end{equation}
			\item Introduce a super-operator-valued function $\mathcal{P}_{KG,NL}(\rho)$ of the density matrix $\rho$ by formula
			\begin{equation}\label{eq:KGNLDef}
				\mathcal{P}_{KG,NL}(\rho) = \mathcal{P}_{KG,par}(\vec{E}) |_{\vec{E} \equiv \Tr \vec{P} \rho}
			\end{equation}
			\item Let the density matrix $\rho(t)$ satisfy an equation of the form \eqref{eq:basicDiffEq}, then define a super-operator-valued time function $ t\in [t_0, +\infty)$
			\begin{equation}\label{eq:KGDef}
				\mathcal{P}_{KG}(t)  \equiv \mathcal{P}_{KG, NL}(\rho(t)).
			\end{equation}
			This function will be called the generalized Kawasaki--Gunton projector corresponding to the ansatz $ \rho_{ans}(\vec{E})$ consistent with the relevant observables $\vec{P}$. 
		\end{enumerate}
		
	\end{definition}

	Just in case, let us clarify that the scalar product in formula \eqref{eq:KGperDef} is simply the notation of the following expression
	\begin{equation*}
		\mathcal{P}_{KG,par}(\vec{E}) X \equiv \rho_{ans}(\vec{E}) \Tr X +\sum_{m}(\Tr X P_m - (\Tr X) E_m)\frac{\partial \rho_{ans} (\vec{E})}{\partial E_m}.
	\end{equation*}
	
	It is clear that in the last step this definition follows formula \eqref{eq:timeDepAndNonLin} adhering the general line of the approach to the derivation of nonlinear master equations discussed in the previous section.
	
	Let us check some properties of the generalized Kawasaki--Gunton projector, which generalize the properties of the traditional Kawasaki--Gunton projector operator.
	\begin{proposition}
		For arbitrary $\vec{E}$ and $\vec{E}'$
		\begin{equation}\label{eq:projLikeProp}
			\mathcal{P}_{KG,par}(\vec{E}) \mathcal{P}_{KG,par}(\vec{E}') = \mathcal{P}_{KG,par}(\vec{E}).
		\end{equation}
		
	\end{proposition}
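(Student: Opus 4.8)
The plan is to reduce the claim to two elementary identities obtained by differentiating the defining relations of the ansatz, and then to exploit the fact that, by \eqref{eq:KGperDef}, the superoperator $\mathcal{P}_{KG,par}(\vec{E})$ acts on an arbitrary matrix $Z$ only through the scalars $\Tr Z$ and $\Tr(Z P_m)$, $m=1,\dots,M$.

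First I would record the consequences of differentiating the defining relations with respect to the parameters. Differentiating the consistency condition \eqref{eq:consistCond}, $\Tr(P_m\rho_{ans}(\vec{E}))=E_m$, with respect to $E_n$ gives $\Tr\!\bigl(P_m\tfrac{\partial\rho_{ans}(\vec{E})}{\partial E_n}\bigr)=\delta_{mn}$, while differentiating the normalization $\Tr\rho_{ans}(\vec{E})=1$ gives $\Tr\tfrac{\partial\rho_{ans}(\vec{E})}{\partial E_n}=0$. (Here I use that an ansatz consists of normalized matrices, $\Tr\rho_{ans}(\vec{E})=1$; this normalization is exactly what makes $\mathcal{P}_{KG,par}$ a genuine idempotent, so if it is not taken as part of the definition it should be added.)

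Next, fix an arbitrary matrix $X$ and set $Y\equiv\mathcal{P}_{KG,par}(\vec{E}')X=\rho_{ans}(\vec{E}')\,\Tr X+\sum_{n}\bigl(\Tr(X P_n)-(\Tr X)E'_n\bigr)\tfrac{\partial\rho_{ans}(\vec{E}')}{\partial E'_n}$. Using the two identities above I compute the only two functionals of $Y$ that enter $\mathcal{P}_{KG,par}(\vec{E})$: namely $\Tr Y=(\Tr X)\,\Tr\rho_{ans}(\vec{E}')+\sum_{n}\bigl(\Tr(X P_n)-(\Tr X)E'_n\bigr)\Tr\tfrac{\partial\rho_{ans}(\vec{E}')}{\partial E'_n}=\Tr X$, and $\Tr(Y P_m)=(\Tr X)E'_m+\sum_{n}\bigl(\Tr(X P_n)-(\Tr X)E'_n\bigr)\delta_{mn}=(\Tr X)E'_m+\Tr(X P_m)-(\Tr X)E'_m=\Tr(X P_m)$.

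Finally, since $\mathcal{P}_{KG,par}(\vec{E})Z$ depends on $Z$ only through $\Tr Z$ and the numbers $\Tr(Z P_m)$, and we have just shown $\Tr Y=\Tr X$ and $\Tr(Y P_m)=\Tr(X P_m)$ for every $m$, it follows at once that $\mathcal{P}_{KG,par}(\vec{E})Y=\mathcal{P}_{KG,par}(\vec{E})X$, which is precisely \eqref{eq:projLikeProp}; the special case $\vec{E}'=\vec{E}$ yields idempotency of $\mathcal{P}_{KG,par}(\vec{E})$. I do not expect any real obstacle: the only points requiring care are invoking the normalization $\Tr\rho_{ans}=1$ (without which already $\Tr Y=\Tr X$ fails) and keeping the element-wise vector/scalar-product notation of \eqref{eq:KGperDef} consistent; everything else is the bookkeeping sketched above.
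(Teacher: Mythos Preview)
Your proof is correct and rests on the same underlying identities as the paper's (unit trace of $\rho_{ans}$, consistency conditions, and their $\vec{E}$-derivatives), but it is organized dually to the paper's argument. The paper works on the \emph{range} side: it computes $\mathcal{P}_{KG,par}(\vec{E})$ applied to the specific matrices $\rho_{ans}(\vec{E}')$ and $\partial\rho_{ans}(\vec{E}')/\partial\vec{E}'$ that span the image of $\mathcal{P}_{KG,par}(\vec{E}')$, and then reassembles the full expression by linearity. You work on the \emph{domain} side: you observe that $\mathcal{P}_{KG,par}(\vec{E})$ factors through the linear functionals $Z\mapsto\Tr Z$ and $Z\mapsto\Tr(ZP_m)$, and you show these functionals are left invariant by $\mathcal{P}_{KG,par}(\vec{E}')$. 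Your route is slightly more conceptual and avoids the intermediate step of differentiating the already-computed $\mathcal{P}_{KG,par}(\vec{E})\rho_{ans}(\vec{E}')$ with respect to $\vec{E}'$; the paper's route has the minor advantage of exhibiting explicitly how each generator of the image transforms. Your caveat about needing $\Tr\rho_{ans}(\vec{E})=1$ is well placed: the paper's own proof invokes exactly this (``$\rho_{ans}(\vec{E}')$ are density matrices and in particular have a unit trace''), even though the formal definition of the ansatz does not state it.
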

	
	\begin{proof}
		Due to the fact that $\rho_{ans} (\vec{E}')$ are density matrices and in particular have a unit trace and conditions \eqref{eq:consistCondVec}, substituting in \eqref{eq:KGperDef}, we have
		\begin{align*}
			\mathcal{P}_{KG,par}(\vec{E}) \rho_{ans} (\vec{E}') &=  \rho_{ans}(\vec{E}) \Tr  \rho_{ans} (\vec{E}') +\left(\Tr ( \rho_{ans} (\vec{E}') \vec{P})- (\Tr  \rho_{ans} (\vec{E}')) \vec{E},\frac{\partial \rho_{ans} (\vec{E})}{\partial \vec{E}}\right)\\
			&=  \rho_{ans}(\vec{E}) + \left(\vec{E}' - \vec{E},  \frac{\partial \rho_{ans} (\vec{E})}{\partial \vec{E}}\right).
		\end{align*}
		In addition, note that
		\begin{align*}
			\mathcal{P}_{KG,par}(\vec{E})\frac{\partial \rho_{ans} (\vec{E}')}{\partial \vec{E}'} &=  \frac{\partial}{\partial \vec{E}'} (	\mathcal{P}_{KG,par}(\vec{E}) \rho_{ans} (\vec{E}'))\\
			&= \frac{\partial}{\partial \vec{E}'} \left(\rho_{ans}(\vec{E}) + \left(\vec{E}' - \vec{E},  \frac{\partial \rho_{ans} (\vec{E})}{\partial \vec{E}}\right)\right) =  \frac{\partial \rho_{ans} (\vec{E})}{\partial \vec{E}}.
		\end{align*}
		Then, according to the definition of \eqref{eq:KGperDef} taking into account linearity $\mathcal{P}_{KG,par}(\vec{E})$, we have
		\begin{align*}
			&\mathcal{P}_{KG,par}(\vec{E}) \mathcal{P}_{KG,par}(\vec{E}') X \\
			&= \mathcal{P}_{KG,par}(\vec{E})\left(\rho_{ans}(\vec{E}') \Tr X +\left(\Tr (X \vec{P})- (\Tr X) \vec{E}',\frac{\partial \rho_{ans} (\vec{E}')}{\partial \vec{E}'}\right)\right)\\
			&=\mathcal{P}_{KG,par}(\vec{E}) \rho_{ans}(\vec{E}') \Tr X +\left(\Tr (X \vec{P})- (\Tr X) \vec{E}', \mathcal{P}_{KG,par}(\vec{E})\frac{\partial \rho_{ans} (\vec{E}')}{\partial \vec{E}'}\right)\\
			&= \left(\rho_{ans}(\vec{E}) + \left(\vec{E}' - \vec{E},  \frac{\partial \rho_{ans} (\vec{E})}{\partial \vec{E}}\right)\right) \Tr X   +\left(\Tr (X \vec{P})- (\Tr X) \vec{E}', \frac{\partial \rho_{ans} (\vec{E})}{\partial \vec{E}}\right)\\
			&=  \rho_{ans}(\vec{E}) \Tr X +\left(\Tr (X \vec{P})- (\Tr X) \vec{E},\frac{\partial \rho_{ans} (\vec{E})}{\partial \vec{E}}\right) = \mathcal{P}_{KG,par}(\vec{E}) X.
		\end{align*}
		The result is \eqref{eq:projLikeProp}.
	\end{proof}

	In particular, if you put $ \vec{E} = \vec{E}' $ in \eqref{eq:projLikeProp}, then it will take the form
	\begin{equation*}
		(\mathcal{P}_{KG,par}(\vec{E}))^2 = \mathcal{P}_{KG,par}(\vec{E}),
	\end{equation*}
	i.e. $\mathcal{P}_{KG,par}(\vec{E})$ is idempotent at every fixed $\vec{E} $. Note that this, in principle, opens up the possibility of using it to derive master equations directly as a time-independent projector.
	
	Besides, from \eqref{eq:projLikeProp} directly follows the characteristic Kawasaki--Gunton operator \cite[Equation (2.3.29)]{Zubarev1997} property 
	\begin{equation*}
		\mathcal{P}_{KG}(t) \mathcal{P}_{KG}(t') = \mathcal{P}_{KG}(t)
	\end{equation*}
	for arbitrary $t, t'\in [t_0, +\infty)$. It is interesting to note that the time-dependent Argyres--Kelley projectors \eqref{eq:AKproj} also satisfy this property. 
	
	\begin{proposition}\label{prop:ans}
		If $\Tr \rho = 1$ (in particular, if $\rho$ is a density matrix)
		\begin{equation}\label{eq:projDenMat}
			\mathcal{P}_{KG,NL}(\rho) \rho = \rho_{ans}(\vec{E})_{\vec{E} = \Tr (\rho \vec{P}) } .
		\end{equation}
	\end{proposition}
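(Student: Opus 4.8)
The plan is to prove \eqref{eq:projDenMat} by simply unwinding Definition~\ref{def:KGprojector} and observing that the ``correction'' term of the generalized Kawasaki--Gunton superoperator annihilates any trace-one matrix whose relevant averages are used as its parameters. Concretely, I would first set $\vec{E} \equiv \Tr(\vec{P}\rho)$ and invoke \eqref{eq:KGNLDef} to rewrite the left-hand side as $\mathcal{P}_{KG,NL}(\rho)\rho = \mathcal{P}_{KG,par}(\vec{E})\,\rho$ with this particular value of $\vec{E}$. Then I would apply formula \eqref{eq:KGperDef} with $X = \rho$, which yields two contributions: $\rho_{ans}(\vec{E})\,\Tr\rho$ and the scalar product $\left(\Tr(\rho\vec{P}) - (\Tr\rho)\,\vec{E},\ \partial\rho_{ans}(\vec{E})/\partial\vec{E}\right)$.

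Next I would use the hypothesis $\Tr\rho = 1$. The first contribution then collapses to exactly $\rho_{ans}(\vec{E})$. In the second contribution, the vector appearing in the first slot of the scalar product is $\Tr(\rho\vec{P}) - (\Tr\rho)\,\vec{E} = \vec{E} - \vec{E} = \vec{0}$, using componentwise cyclicity of the trace ($\Tr(\rho P_m) = \Tr(P_m\rho)$) together with the choice $\vec{E} = \Tr(\vec{P}\rho)$; hence the whole scalar product vanishes. Restoring the notation $\vec{E} = \Tr(\rho\vec{P})$ gives precisely \eqref{eq:projDenMat}.

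I do not expect any real obstacle here: the proof is essentially pure bookkeeping. The one point that warrants care is the role of the substitution $\vec{E}\equiv\Tr\vec{P}\rho$ in \eqref{eq:KGNLDef} --- it is exactly the fact that the \emph{same} $\vec{E}$ plays the role both of the parameter of $\mathcal{P}_{KG,par}$ and of the vector subtracted inside the bracket that forces the correction term to disappear. It is also worth noting for the reader that this statement does not use the consistency conditions \eqref{eq:consistCond} at all (those become relevant only when identifying the image of the projector with the ansatz family); only trace-linearity and the normalization $\Tr\rho=1$ enter.
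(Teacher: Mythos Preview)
Your proposal is correct and follows essentially the same route as the paper: substitute $\vec{E}=\Tr(\vec{P}\rho)$ via \eqref{eq:KGNLDef}, apply \eqref{eq:KGperDef} with $X=\rho$, and use $\Tr\rho=1$ so that the first term reduces to $\rho_{ans}(\vec{E})$ while the bracket vanishes. Your additional remark that the consistency conditions \eqref{eq:consistCond} are not needed here is accurate and a nice observation the paper does not spell out.
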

	
	\begin{proof}
		Taking into account $\Tr \rho = 1$, the definition of \ref{def:KGprojector} takes the following form 
		\begin{align*}
			\mathcal{P}_{KG,par}(\vec{E}) |_{\vec{E} = \Tr (\rho \vec{P})} \rho = & \rho_{ans}(\vec{E})|_{\vec{E} = \Tr (\rho \vec{P})} \Tr \rho\\
			&
			+\left(\Tr (\rho \vec{P})- (\Tr \rho)(\Tr (\rho \vec{P})), \left. \frac{\partial \rho_{ans} (\vec{E})}{\partial \vec{E}}  \right|_{\vec{E} = \Tr (\rho \vec{P})}\right)
			= \rho_{ans}(\vec{E})  |_{\vec{E} = \Tr (\rho \vec{P})}
		\end{align*}
		As a result, we got \eqref{eq:projDenMat}.
	\end{proof}
	
	Thus, the function $\mathcal{F}(\rho) \equiv \mathcal{P}_{KG,NL}(\rho) \rho$ maps an arbitrary density matrix to a density matrix corresponding to the selected ansatz $ \rho_{ans}(\vec{E})$ such that the averages for all relevant observables of the original density matrix $\rho$ and the ansatz $ \rho_{ans}(\vec{E})$ coincide. 
	
	\begin{remark}\label{rem:zeroTrace}
		Note that in the case of traceless matrices $\Tr X = 0$ the expressions defining the Kawasaki--Gunton operator are simplified to
		\begin{equation*}
			\mathcal{P}_{KG,par}(\vec{E}) X = \left(\Tr \vec{P}  X  , \frac{\partial \rho_{ans}(\vec{E})}{\partial \vec{E}} \right).
		\end{equation*}
		In particular, this is fulfilled if $X = \mathcal{L}(t) Y$, where $\mathcal{L}(t)$ is the generator of the preserving trace dynamics, or if $X = \frac{d}{dt} \rho(t)$ at some time $t$, if $\Tr \rho(t) = 1$ at any time and $\rho(t)$ is a differentiable function of time $t$.
	\end{remark}
	
	\begin{proposition}
		Let $ \rho(t)$ be the density matrix at any $t \in [t_0, +\infty)$, then condition \eqref{eq:RobertsonType} is satisfied.
	\end{proposition}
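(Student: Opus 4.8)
The plan is to reduce the claim to the statement that two naturally occurring terms coincide. Recall that here $\mathcal{P}(t) = \mathcal{P}_{KG}(t) = \mathcal{P}_{KG,NL}(\rho(t)) = \mathcal{P}_{KG,par}(\vec{E}(t))$ with $\vec{E}(t) \equiv \Tr(\vec{P}\rho(t))$, by \eqref{eq:KGperDef}, \eqref{eq:KGNLDef} and \eqref{eq:KGDef}. The key point is that one should not try to differentiate the superoperator $\mathcal{P}_{KG,par}(\vec{E}(t))$ in $\vec{E}$ directly, but instead use the Leibniz rule backwards, writing
\begin{equation*}
    \dot{\mathcal{P}}_{KG}(t)\rho(t) = \frac{d}{dt}\bigl(\mathcal{P}_{KG}(t)\rho(t)\bigr) - \mathcal{P}_{KG}(t)\dot{\rho}(t),
\end{equation*}
and then evaluating the two terms on the right-hand side separately.

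For the first term I would invoke Proposition \ref{prop:ans}: since $\rho(t)$ is a density matrix, $\Tr\rho(t)=1$, so $\mathcal{P}_{KG}(t)\rho(t) = \rho_{ans}(\vec{E})|_{\vec{E}=\vec{E}(t)}$. Differentiating in $t$ by the chain rule — legitimate because $\rho_{ans}$ is continuously differentiable in $\vec{E}$ and $\rho(t)$ is differentiable by \eqref{eq:basicDiffEq} (with $\vec{E}(t)$ staying in the domain of $\rho_{ans}$, as is implicit in $\mathcal{P}_{KG}(t)$ being defined) — gives
\begin{equation*}
    \frac{d}{dt}\bigl(\mathcal{P}_{KG}(t)\rho(t)\bigr) = \sum_m \dot{E}_m(t)\,\frac{\partial \rho_{ans}(\vec{E})}{\partial E_m}\bigg|_{\vec{E}=\vec{E}(t)}, \qquad \dot{E}_m(t) = \Tr\bigl(P_m\dot{\rho}(t)\bigr).
\end{equation*}

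For the second term I would use that $\Tr\rho(t)\equiv 1$ forces $\Tr\dot{\rho}(t)=0$, so $\dot{\rho}(t)$ is traceless and Remark \ref{rem:zeroTrace} applies, yielding
\begin{equation*}
    \mathcal{P}_{KG}(t)\dot{\rho}(t) = \mathcal{P}_{KG,par}(\vec{E}(t))\dot{\rho}(t) = \sum_m \Tr\bigl(P_m\dot{\rho}(t)\bigr)\,\frac{\partial \rho_{ans}(\vec{E})}{\partial E_m}\bigg|_{\vec{E}=\vec{E}(t)},
\end{equation*}
which is exactly the previous expression. Subtracting the two gives $\dot{\mathcal{P}}_{KG}(t)\rho(t)=0$, i.e.\ \eqref{eq:RobertsonType}. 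I do not expect a real obstacle: the proof is short, and the only subtle point is keeping straight the distinction between $\dot{\mathcal{P}}_{KG}(t)\rho(t)$ (derivative of the superoperator, then applied to $\rho(t)$) and $\frac{d}{dt}(\mathcal{P}_{KG}(t)\rho(t))$ (derivative of the product), together with checking that the smoothness hypotheses license the chain rule. An alternative, more computational route is to differentiate \eqref{eq:KGperDef} in $\vec{E}$ and observe that each $\partial_{E_m}\mathcal{P}_{KG,par}(\vec{E})$ applied to $\rho(t)$ carries a factor $\Tr(\rho(t)P_k) - (\Tr\rho(t))E_k$, which vanishes identically once $\vec{E}=\vec{E}(t)$; this reaches the same conclusion but is messier.
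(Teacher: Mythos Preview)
Your proof is correct and follows essentially the same approach as the paper's own proof: both combine Proposition~\ref{prop:ans}, Remark~\ref{rem:zeroTrace}, the chain rule, and the Leibniz rule to show that $\mathcal{P}_{KG}(t)\dot{\rho}(t)$ and $\frac{d}{dt}\bigl(\mathcal{P}_{KG}(t)\rho(t)\bigr)$ coincide. The only difference is the order of exposition --- the paper starts from $\mathcal{P}_{KG}(t)\dot{\rho}(t)$ and works forward to the Leibniz expansion, whereas you begin with the Leibniz identity and compute the two terms separately --- but the logical content is identical.
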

	
	\begin{proof}
		With remark \ref{rem:zeroTrace} we have
		\begin{align*}
			\mathcal{P}_{KG}(t) \frac{d}{dt}\rho(t) &= \left(\Tr \vec{P}  \frac{d}{dt}\rho(t) , \frac{\partial \rho_{ans}(\vec{E})}{\partial \vec{E}} \right)_{\vec{E}  = \Tr (\rho(t) \vec{P})} 
			=\left( \frac{d}{dt} \vec{E},\frac{\partial \rho_{ans} (\vec{E})}{\partial \vec{E}}\right)_{\vec{E}  = \Tr (\rho(t) \vec{P})} 
			\\
			&= \frac{d}{dt} \left(\rho_{ans} (\vec{E})|_{\vec{E}  = \Tr (\rho(t) \vec{P})}\right)=  \frac{d}{dt}(\mathcal{P}_{KG}(t)\rho(t))
			=  \dot{\mathcal{P}}_{KG}(t) \rho(t)+ \mathcal{P}_{KG}(t)\frac{d}{dt}\rho(t)
		\end{align*}
		Reducing the terms $ \mathcal{P}_{KG}(t)\frac{d}{dt}\rho(t)$, we get \eqref{eq:RobertsonType}.
	\end{proof}
	
	Thus, we can use corollary \ref{cor:pertExp} in the case of Robertson dynamics, i.e. omitting terms with $\dot{\mathcal{P}}(t)$.
	
	\begin{theorem}\label{th:secOrderEq}
		Equation \eqref{eq:masterEquation}, omitting the terms above second order on $\lambda$ and assuming that the initial condition is consistent with the projector, in the case of the generalized Kawasaki--Gunton projector takes the form
		\begin{align}
			\frac{d}{dt}\left(\rho_{ans}(\vec{E})|_{\vec{E} = \Tr \rho(t) \vec{P}} \right)  =& \lambda\left(\Tr \left( \vec{P}\mathcal{L}(t)\rho_{ans} (\vec{E}) \right)   , \frac{\partial \rho_{ans}(\vec{E})}{\partial \vec{E}} \right)_{\vec{E} = \Tr \rho(t) \vec{P}} \nonumber	\\
			&+ \lambda^2 \left(\Tr \left( \vec{P} \mathcal{L}(t)  \int_{t_0}^t dt_1 \mathcal{L}(t_1)\rho_{ans} (\vec{E})   \right), \frac{\partial \rho_{ans}(\vec{E})}{\partial \vec{E}} \right)_{\vec{E} = \Tr \rho(t) \vec{P}} \nonumber \\
			& 	-  \lambda^2\left(\Tr \left(  \vec{P} \mathcal{L}(t)\rho_{ans} (\vec{E}) \right) , \frac{\partial \rho_{ans}(\vec{E})}{\partial \vec{E}} \right)_{\vec{E} = \Tr \rho(t) \vec{P}} \times \nonumber\\
			& \qquad \qquad\times
			\left(\Tr \left(  \vec{P}\int_{t_0}^t dt_1\mathcal{L}(t_1)\rho_{ans} (\vec{E}) \right)  , \frac{\partial \rho_{ans}(\vec{E})}{\partial \vec{E}} \right)_{\vec{E} = \Tr \rho(t) \vec{P}} . \label{eq:secOrderEq}
		\end{align}
	\end{theorem}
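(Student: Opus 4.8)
The plan is to substitute the generalized Kawasaki--Gunton projector into the master equation \eqref{eq:masterEquation} of Theorem \ref{th:masterEquation} and the expansion of Corollary \ref{cor:pertExp}, exploiting three facts already available. (a) By the proposition just proved, $\mathcal{P}_{KG}(t)$ realizes Robertson dynamics \eqref{eq:RobertsonType} whenever $\rho(t)$ is a density matrix, so the $\dot{\mathcal{P}}$-terms drop out everywhere, in particular from \eqref{eq:coeffKn}; moreover $\frac{d}{dt}(\mathcal{P}_{KG}(t)\rho(t))=\mathcal{P}_{KG}(t)\dot\rho(t)=\lambda\,\mathcal{P}_{KG}(t)\mathcal{L}(t)\rho(t)$. (b) By Proposition \ref{prop:ans}, $\mathcal{P}_{KG}(t)\rho(t)=\rho_{ans}(\vec{E})|_{\vec{E}=\Tr\rho(t)\vec{P}}$, which is exactly the matrix differentiated on the left of \eqref{eq:secOrderEq}. (c) By Remark \ref{rem:zeroTrace}, $\mathcal{P}_{KG,par}(\vec{E})X=(\Tr\vec{P}X,\partial\rho_{ans}(\vec{E})/\partial\vec{E})$ for traceless $X$ — and $\mathcal{L}(t)(\cdots)$ is always traceless since $\mathcal{L}(t)$ preserves the trace — while differentiating \eqref{eq:consistCond} gives $\Tr P_m\,\partial\rho_{ans}/\partial E_n=\delta_{mn}$, so that $\mathcal{P}_{KG,par}(\vec{E})$ leaves each $\partial\rho_{ans}(\vec{E})/\partial E_n$ and $\rho_{ans}(\vec{E})$ itself invariant.

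First I would dispose of the inhomogeneous term $\mathcal{I}(t)\mathcal{Q}_{KG}(t)\rho(t_0)$. Consistency of the initial condition means $\rho(t_0)=\rho_{ans}(\vec{E})|_{\vec{E}=\vec{E}_0}$ with $\vec{E}_0=\Tr\rho(t_0)\vec{P}$; the Dyson (Peano--Baker) series for $\mathcal{U}_{t_0}^t$ gives $\vec{E}_t-\vec{E}_0=O(\lambda)$, and combining the identity $\mathcal{P}_{KG,par}(\vec{E}_t)\rho_{ans}(\vec{E}_0)=\rho_{ans}(\vec{E}_t)+(\vec{E}_0-\vec{E}_t,\partial\rho_{ans}(\vec{E}_t)/\partial\vec{E})$, established in the proof of idempotency, with a second-order Taylor expansion of $\rho_{ans}$ about $\vec{E}_t$ yields the cancellation $\mathcal{Q}_{KG}(t)\rho(t_0)=O(\lambda^2)$. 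Since $\mathcal{I}(t)=O(\lambda)$ by \eqref{eq:expansionOfI}, the product is $O(\lambda^3)$ and is discarded under the second-order truncation.

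Next I would expand $\mathcal{K}(t)\mathcal{P}_{KG}(t)\rho(t)=\lambda\mathcal{K}_1(t)\rho_{ans}(\vec{E}_t)+\lambda^2\mathcal{K}_2(t)\rho_{ans}(\vec{E}_t)+O(\lambda^3)$. In the Robertson case \eqref{eq:coeffKn} gives $\mathcal{K}_1(t)=\mathcal{P}_{KG}(t)\mathcal{L}(t)\mathcal{P}_{KG}(t)$ and $\mathcal{K}_2(t)=\int_{t_0}^t\! dt_1\,\mathcal{P}_{KG}(t)\mathcal{L}(t)\mathcal{L}(t_1)\mathcal{P}_{KG}(t)-\mathcal{P}_{KG}(t)\mathcal{L}(t)\mathcal{P}_{KG}(t)\int_{t_0}^t\! dt_1\,\mathcal{P}_{KG}(t)\mathcal{L}(t_1)\mathcal{P}_{KG}(t)$; applied to $\rho_{ans}(\vec{E}_t)$ and using (c) to remove the projectors that hit $\rho_{ans}(\vec{E}_t)$ or $\partial\rho_{ans}/\partial\vec{E}$, these reduce to $\mathcal{P}_{KG}(t)\mathcal{L}(t)\rho_{ans}(\vec{E}_t)$, $\mathcal{P}_{KG}(t)\mathcal{L}(t)\int_{t_0}^t\mathcal{L}(t_1)\rho_{ans}(\vec{E}_t)dt_1$, and $-\mathcal{P}_{KG}(t)\mathcal{L}(t)\,\mathcal{P}_{KG}(t)\!\int_{t_0}^t\mathcal{L}(t_1)\rho_{ans}(\vec{E}_t)dt_1$. (One can reach exactly these three terms directly, without Theorem \ref{th:masterEquation}, by inserting $\rho(t)=\rho_{ans}(\vec{E}_t)+\lambda\mathcal{Q}_{KG}(t)\int_{t_0}^t\mathcal{L}(t_1)\rho_{ans}(\vec{E}_t)dt_1+O(\lambda^2)$ — from the same Dyson expansion — into $\lambda\mathcal{P}_{KG}(t)\mathcal{L}(t)\rho(t)$ and expanding $\mathcal{Q}_{KG}(t)=I-\mathcal{P}_{KG}(t)$.) Applying Remark \ref{rem:zeroTrace} to the outer projection of the first two terms produces the first two lines of \eqref{eq:secOrderEq}; for the third, I would evaluate the inner projection $\mathcal{P}_{KG}(t)\int_{t_0}^t\mathcal{L}(t_1)\rho_{ans}(\vec{E}_t)dt_1=(\Tr\vec{P}\int_{t_0}^t\mathcal{L}(t_1)\rho_{ans}(\vec{E})dt_1,\partial\rho_{ans}/\partial\vec{E})|_{\vec{E}_t}$ and then the outer one, and match the left-hand side via Proposition \ref{prop:ans}.

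The hard part is this last step: recasting the subtraction term $-\lambda^2\mathcal{P}_{KG}(t)\mathcal{L}(t)\mathcal{P}_{KG}(t)\int_{t_0}^t\mathcal{L}(t_1)\rho_{ans}(\vec{E}_t)dt_1$ into the product form shown in the last two lines of \eqref{eq:secOrderEq} requires keeping careful track, inside the $(\,\cdot\,,\partial\rho_{ans}/\partial\vec{E})$ notation, of which occurrences of $\mathcal{P}_{KG}(t)$ act on genuinely traceless matrices (where Remark \ref{rem:zeroTrace} applies) and which act on the invariant combinations $\partial\rho_{ans}/\partial\vec{E}$, and then identifying the outcome as the stated product of two $(\,\cdot\,,\partial\rho_{ans}/\partial\vec{E})$-matrices evaluated at $\vec{E}=\Tr\rho(t)\vec{P}$. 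A subsidiary but essential point is the sharpened estimate $\mathcal{Q}_{KG}(t)\rho(t_0)=O(\lambda^2)$ rather than merely $O(\lambda)$: it is precisely in that second-order Taylor remainder that the projector-consistency of the initial condition is genuinely used.
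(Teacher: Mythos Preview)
Your proposal is correct and follows essentially the same route as the paper: invoke Theorem~\ref{th:masterEquation} and Corollary~\ref{cor:pertExp} in the Robertson case, use Proposition~\ref{prop:ans} to identify $\mathcal{P}_{KG}(t)\rho(t)$ with $\rho_{ans}(\vec{E}_t)$, and apply Remark~\ref{rem:zeroTrace} to rewrite each projected term in the form $(\Tr\vec{P}\,\cdot\,,\partial\rho_{ans}/\partial\vec{E})$, combining $\mathcal{K}_1=\check{\mathcal{M}}_1$ and $\mathcal{K}_2=\check{\mathcal{M}}_2-\check{\mathcal{M}}_1\mathcal{M}_1$. Your explicit treatment of the inhomogeneous term --- arguing $\mathcal{Q}_{KG}(t)\rho(t_0)=O(\lambda^2)$ via Taylor expansion so that $\mathcal{I}(t)\mathcal{Q}_{KG}(t)\rho(t_0)=O(\lambda^3)$ --- is a detail the paper passes over, but it is not a different strategy.
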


	\begin{proof}
		Taking into account remark \ref{rem:zeroTrace} and proposition \ref{prop:ans}, we have (in terms of corollary~\ref{cor:pertExp})
		\begin{align*}
			\check{\mathcal{M}}_{1}(t)\rho(t)  &= \mathcal{P}_{KG}(t) \mathcal{L}(t) \mathcal{P}_{KG}(t) \rho(t) = \mathcal{P}_{KG}(t) \mathcal{L}(t) \rho_{ans}(\vec{E})_{\vec{E} = \Tr \rho(t) \vec{P}} \\
			&= \left(\Tr (\vec{P} \mathcal{L}(t)\rho_{ans} (\vec{E}) )  , \frac{\partial \rho_{ans}(\vec{E})}{\partial \vec{E}} \right)_{\vec{E} = \Tr \rho(t) \vec{P}} ,	
			\\
			\mathcal{M}_{1}(t)\rho(t) &= \mathcal{P}_{KG}(t) \int_{t_0}^t dt_1 \mathcal{L}(t_1) \mathcal{P}_{KG}(t) \rho(t)\\
			&=  \left(\Tr \left(\vec{P}  \int_{t_0}^t dt_1 \mathcal{L}(t_1) \rho_{ans} (\vec{E}) \right)  , \frac{\partial \rho_{ans}(\vec{E})}{\partial \vec{E}} \right)_{\vec{E} = \Tr \rho(t) \vec{P}} ,
			\\
			\mathcal{M}_{2}(t)\rho(t) &= \mathcal{P}_{KG}(t) \mathcal{L}(t)  \int_{t_0}^t dt_1   \mathcal{L}(t_1) \mathcal{P}_{KG}(t) \rho(t) \\
			&=  \left(\Tr \left( \vec{P}  \mathcal{L}(t) \int_{t_0}^t dt_1 \mathcal{L}(t_1)\rho_{ans} (\vec{E})   \right), \frac{\partial \rho_{ans}(\vec{E})}{\partial \vec{E}} \right)_{\vec{E} = \Tr \rho(t) \vec{P}} .
		\end{align*}
		According to formula \eqref{eq:coeffKn} we have
		\begin{equation*}
			\mathcal{K}_{1}(t)= \check{\mathcal{M}}_{1}(t), \qquad
			\mathcal{K}_{2}(t)= \check{\mathcal{M}}_{2}(t) -\check{\mathcal{M}}_{1}(t) \mathcal{M}_{1}(t).
		\end{equation*}
		As a result, by theorem \ref{th:masterEquation} with corollary~\ref{cor:pertExp} we get \eqref{eq:secOrderEq}.
	\end{proof}
	
	If we confine ourselves to the first order of the expansion in $\lambda$, then, taking into account 
	\begin{equation*}
		\frac{d}{dt}  \rho_{ans}(\vec{E}(t)) = \left(\frac{d}{dt} \vec{E}(t), \frac{\partial \rho_{ans}(\vec{E})}{\partial \vec{E}}\right)_{\vec{E} = \Tr \rho(t) \vec{P}} 	,
	\end{equation*}
	equation \eqref{eq:secOrderEq} reduces to
	\begin{equation*}
		\frac{d}{dt} \vec{E}(t) = \lambda \Tr ( \vec{P} \mathcal{L}(t)\rho_{ans} (\vec{E}(t) ),
	\end{equation*}
	where $\vec{E}(t) = \Tr \rho(t) \vec{P}$.  This equation can be obtained by averaging \eqref{eq:basicDiffEq} over relevant observables, assuming that the exact density matrix $\rho(t)$ can be replaced by the ansatz $ \rho_{ans} (\vec{E}(t)) $. Thus, theorem \ref{th:secOrderEq} defines the following perturbative correction to such a naive approach. Similarly, using  theorem~\ref{th:masterEquation} and corollary~\ref{cor:pertExp}, one can obtain perturbative corrections of higher orders. 
	
	If we average both parts of equation \eqref{eq:secOrderEq} with $\vec{P}$, it can also be rewritten as an equivalent equation for averages from relevant observables $\vec{E}(t) = \Tr \rho(t) \vec{P}$:
	\begin{align}
		&\frac{d}{dt} \vec{E}(t) = \lambda   \Tr ( \vec{P}\mathcal{L}(t)\rho_{ans} (\vec{E}(t) ) + \lambda^2 \Tr \left( \vec{P} \mathcal{L}(t)  \int_{t_0}^t dt_1 \mathcal{L}(t_1)\rho_{ans} (\vec{E}(t))   \right) \nonumber \\
		& 	-  \lambda^2 \Tr\Biggl( \vec{P} \left(\Tr (  \vec{P} \mathcal{L}(t)\rho_{ans} (\vec{E}) ) , \frac{\partial \rho_{ans}(\vec{E})}{\partial \vec{E}} \right) \times \nonumber\\
		& \qquad \times \left(\Tr (  \vec{P}\int_{t_0}^t dt_1\mathcal{L}(t_1)\rho_{ans} (\vec{E}) )  , \frac{\partial \rho_{ans}(\vec{E})}{\partial \vec{E}} \right) \Biggr)_{\vec{E} =\vec{E}(t) }. \label{eq:secOrderEqE}
	\end{align}
	Both equation \eqref{eq:secOrderEq} and equation \eqref{eq:secOrderEqE} are non-linear in the general case.
	
	Note that theorem \ref{th:secOrderEq} also allows us to account for the case where the state is not initially consistent with the projector. This is important because some authors \cite[Section 2.4.2]{Zubarev1997} consider the assumption of an initial ansatz as a limitation of methods based on the Kawasaki--Gunton projector. However, theorem \ref{th:secOrderEq}, that we have obtained, allows us to get rid of this limitation.
	
	\section{Examples of projectors}
	\label{sec:examples}
	
	Firstly, we give an example which explains why, from our point of view, it is natural to parameterize the ansatz in the generalized Kawasaki--Gunton projector by averages of relevant observables. Consider the case of a 2-level system and assume that the family of distributions is given
	\begin{equation}\label{eq:twoLevelAns}
		\rho(\vec{\beta}) = \frac{1}{Z_f(\vec{\beta})} f((\vec{\beta}, \vec{\sigma})), \qquad Z_f(\vec{\beta}) \equiv \Tr f((\vec{\beta}, \vec{\sigma}))
	\end{equation}
	with parameters $\vec{\beta} \in \mathbb{R}^3$, where $f$ is an arbitrary function, $\vec{\sigma}$ is a vector composed of Pauli matrices \cite[page 20]{Holevo2018}. In particular, the cases of Gibbs form \eqref{eq:GibbsFamily} and Renyi form \eqref{eq:RenyiAns} fall here if the Pauli matrices $\vec{\sigma}$ or their subset are chosen as the relevant observables (here it will be convenient for us to still consider three-dimensional $\vec{\beta}$, but to assume that the corresponding components this vector are zero). 
	
	The function of the matrix $2 \times 2$ in terms of Pauli matrices can be calculated explicitly
	\begin{equation*}
		\rho(\vec{\beta}) =  \frac{1}{Z_f(\vec{\beta})} \left( I \frac{f(|\vec{\beta}|) + f(-|\vec{\beta}|)  }{2} + (\vec{\beta}, \vec{\sigma})   \frac{f(|\vec{\beta}|) - f(-|\vec{\beta}|)  }{2 |\vec{\beta}|}\right).
	\end{equation*}
	Then from the normalisation condition we obtain $Z_f(\vec{\beta}) = f(|\vec{\beta}|) + f(-||\vec{\beta}|)$
	and from the consistency conditions we get
	\begin{equation*}
		\vec{E} =\Tr \vec{\sigma}\rho(\vec{\beta}) = \vec{\beta}  \frac{f(|\vec{\beta}|) - f(-|\vec{\beta}|)  }{|\vec{\beta}| Z_f(\vec{\beta})}.
	\end{equation*}
	As a result, after reparametrization in terms of $\vec{E} $ we get
	\begin{equation*}
		\rho_{ans}(\vec{E}) = \rho(\vec{\beta}(\vec{E} ))= \frac12 \left(I + (\vec{E}, \vec{\sigma})\right)
	\end{equation*}
	(if in the original parametrization the component $\vec{\beta}$ is zeroed due to the absence of the corresponding Pauli matrix among the set of relevant observables, then the corresponding component $\vec{E}$ is also zeroed). Thus, the ansatz $\rho_{ans}(\vec{E})$ itself and its corresponding Kawasaki--Gunton projector will not depend on the function $f$ at all. However, the standard construction will set it in very different parameterizations \eqref{eq:GibbsFamily} and \eqref{eq:RenyiAns}.
	
	Moreover, if one chooses all three Pauli matrices as relevant observables, then the Kawasaki--Gunton projector reduces to the identical one. And if only some of the Pauli matrices are relevant, then the projector in terms of the Bloch vector leaves only those of its components that correspond to the relevant Pauli matrices. For example, if we choose $\sigma_x$ and $\sigma_z$ as relevant observables, then by a direct calculation we obtain
	\begin{equation}\label{eq:linExamp}
		\mathcal{P}_{KG,par}(\vec{E} ) X = \frac12 (\Tr X + \sigma_x \Tr( \sigma_x X) + \sigma_z \Tr( \sigma_z X)).
	\end{equation}
	Thus, the projector $\mathcal{P}_{KG,par}(\vec{E} ) $ will not depend on $\vec{E} $ and, consequently, the generalized Kawasaki--Gunton projector $\mathcal{P}_{KG}(t )$ will not depend on time.

	In general, if the dependence $\rho_{ans}(\vec{E})$ on $\vec{E}$ is linear, then $\mathcal{P}_{KG}(t ) = \operatorname{const} $. 
	\begin{proposition}
		Let the ansatz $\rho_{ans}(\vec{E}) $ be $ \rho_{ans}(\vec{E}) = B_0 + (\vec{E}, \vec{B})$, where $ B_0 $ and $\vec{B}$ constitute a fixed (independent of $\vec{E} $) set of matrices, then $\mathcal{P}_{KG}(t ) = \operatorname{const} $. 
	\end{proposition}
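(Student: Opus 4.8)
The plan is to substitute the linear ansatz directly into the defining formula~\eqref{eq:KGperDef} and to observe that all dependence on $\vec{E}$ cancels, so that $\mathcal{P}_{KG,par}(\vec{E})$ is in fact a single fixed superoperator independent of $\vec{E}$; the constancy of $\mathcal{P}_{KG}(t)$ then follows immediately from Definition~\ref{def:KGprojector}.

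Concretely, first I would note that for $\rho_{ans}(\vec{E}) = B_0 + (\vec{E},\vec{B})$ one has $\partial \rho_{ans}(\vec{E})/\partial \vec{E} = \vec{B}$, which no longer depends on $\vec{E}$. Substituting this together with $\rho_{ans}(\vec{E}) = B_0 + (\vec{E},\vec{B})$ into~\eqref{eq:KGperDef} gives $\mathcal{P}_{KG,par}(\vec{E}) X = (B_0 + (\vec{E},\vec{B}))\Tr X + (\Tr(X\vec{P}) - (\Tr X)\vec{E}, \vec{B})$. Expanding the last scalar product, the contribution $(\Tr X)(\vec{E},\vec{B})$ occurs with a plus sign coming from the first term and with a minus sign coming from the second, so these two cancel, and one is left with $\mathcal{P}_{KG,par}(\vec{E}) X = B_0 \Tr X + (\Tr(X\vec{P}),\vec{B})$, in which $\vec{E}$ no longer appears.

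It then follows from~\eqref{eq:KGNLDef} that $\mathcal{P}_{KG,NL}(\rho)$, which is obtained from $\mathcal{P}_{KG,par}(\vec{E})$ by the substitution $\vec{E} \equiv \Tr\vec{P}\rho$, is one and the same superoperator for every $\rho$; hence by~\eqref{eq:KGDef} the function $\mathcal{P}_{KG}(t) = \mathcal{P}_{KG,NL}(\rho(t))$ takes this same value for all $t \in [t_0,+\infty)$, i.e. $\mathcal{P}_{KG}(t) = \operatorname{const}$. There is essentially no obstacle in this argument; the only point requiring a little care is the bookkeeping of the cancellation of the $\vec{E}$-dependent terms. As a consistency check (not needed for the proof) one may also observe that the consistency conditions~\eqref{eq:consistCondVec} force $\Tr P_m B_0 = 0$ and $\Tr P_m B_k = \delta_{mk}$, which is compatible with the constancy just established and contains~\eqref{eq:linExamp} as a special case.
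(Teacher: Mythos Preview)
Your proposal is correct and follows essentially the same approach as the paper: compute $\partial\rho_{ans}/\partial\vec{E}=\vec{B}$, substitute into~\eqref{eq:KGperDef}, observe the cancellation of the $(\Tr X)(\vec{E},\vec{B})$ terms to obtain $\mathcal{P}_{KG,par}(\vec{E})X = B_0\Tr X + (\Tr(X\vec{P}),\vec{B})$, and conclude via~\eqref{eq:KGNLDef}--\eqref{eq:KGDef}. Your additional consistency check on $B_0$ and $\vec{B}$ matches the content of the remark the paper places immediately after the proof.
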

	
	\begin{proof}
		Taking into account
		\begin{equation*}
			\frac{\partial \rho_{ans} (\vec{E})}{\partial \vec{E}} = \vec{B},
		\end{equation*}
		and formula \eqref{eq:KGperDef} we have
		\begin{align*}
			\mathcal{P}_{KG,par}(\vec{E}) X &= ( B_0 + (\vec{E}, \vec{B}))\Tr X +\left(\Tr (X \vec{P})- (\Tr X) \vec{E},  \vec{B}\right) \\
			&= B_0  \Tr X +\left(\Tr (X \vec{P}),  \vec{B}\right) = \mathcal{P}_{KG,par}(\vec{0}) X.
		\end{align*}
		Thus, according to formulas \eqref{eq:KGNLDef} and \eqref{eq:KGDef} we get $\mathcal{P}_{KG}(t ) = \operatorname{const} $. 
	\end{proof}
	
	\begin{remark}
		In the ansatz $ \rho_{ans}(\vec{E}) = B_0 + (\vec{E}, \vec{B})$, the choice of matrices $B_0$ and $ \vec{B}$ is not arbitrary. Due to the self-conjugacy of $\rho_{ans}(\vec{E})$ and the realness of $\vec{E}$, $B_m$ must also be self-conjugate $ B_m = B_m^{\dagger}$. The normalization condition $ \Tr \rho_{ans}(\vec{E}) = 1$ and the consistency condition $ \Tr (\vec{P} \rho_{ans}(\vec{E})) = \vec{E}$ lead to the following conditions on these matrices
		\begin{equation*}
			\Tr B_{k} P_{m} = \delta_{km}
		\end{equation*}
		where $k, m =0, \ldots, M$ and the notation $P_0 = I$ is introduced, i.e. $B_k$ and $P_m$ are biorthogonal with respect to the Hilbert-Schmidt scalar product.
	\end{remark}

	In particular, one can parameterize the density matrices by a generalized Bloch vector (of dimension $d^2-1$ if the dimension of matrices is $d \times d$) and choose one or more generalized Gell--Mann matrices \cite[Section 2.4]{Alicki2007} as the relevant observables. The Argyres--Kelley projector can in fact also be seen as a special case of the constant-time generalized Kawasaki--Gunton projector if one chooses $\rho_S(\vec{E}) \otimes \rho_B$ as the ansatz, where $ \rho_S(\vec{E})$ is the system density matrix parameterized by its generalized Bloch vector, and select all generalized Gell-Mann matrices in the system tensor space multiplied by the identity matrix in the reservoir space as relevant observables. Thus, the generalized Kawasaki--Gunton projector introduced by us allows us to combine both ''dynamical'' and ''thermodynamical'' approaches to open quantum systems in terms of work~\cite{Semin2020}.

	In the case of a two-level system and one relevant observable $\sigma_z$ the most general, ansatz consistent with this observable, takes the form
	\begin{equation}\label{eq:nonLinAns}
		\rho_{ans}(E) = \frac12 \left(I + E \sigma_z + f( E) \sigma_x + g( E) \sigma_y \right),
	\end{equation}
	where $f$ and $g$ are arbitrary continuously differentiable functions of $E$. Indeed, one component of the Bloch vector is uniquely fixed by the consistency condition $ \Tr (\sigma_z \rho_{ans}(E)) = E $, and the remaining components can be chosen arbitrarily, but since we assume that the ansatz must be fully parameterized by the averages of the relevant observable $ \sigma_z $, they must be functions of $E$.
	
	In that case
	\begin{equation}\label{eq:nonLinAnsDiff}
		\frac{\partial \rho_{ans}(E)}{\partial E} = \frac12 \left( \sigma_z + f'(  E)  \sigma_x  + g'(  E)  \sigma_y \right)
	\end{equation}
	and
	\begin{align*}
		\mathcal{P}_{KG,par}(E)
		= &  \frac12 \biggl(I +  \Tr (X \sigma_z) \sigma_z \\
		&+ (f(  E) \Tr X +  (\Tr (X \sigma_z) - (\Tr X) E)  f'(  E)) \sigma_x \\
		& +  (g(  E) \Tr X +  (\Tr (X \sigma_z) - (\Tr X) E)  g'(  E)) \sigma_y \biggr).
	\end{align*}
	The projector $\mathcal{P}_{KG}(t)$ in this case is no longer reducible to a constant if the functions $f(E)$ and $g(E)$ are not linear.
	
	\section{Examples of equations and dissipative Wick rotation}
	\label{sec:examplesOfEq}
	
	As an example, consider the Gorini-Kossakowski-Sudarshan-Lindblad equation used to describe resonant fluorescence in a rotating frame \cite[Section 3.4.5]{BreuerPetruccione2007} 
	\begin{align} 
		\frac{d}{d t}\rho_{rot}(t) = \lambda \frac{i\Omega}{2}[\sigma_+ + \sigma_-, \rho_{rot}(t)] 
		&+ \gamma_0(N + 1) \left(\sigma_-\rho_{rot}(t)\sigma_+ - \frac{1}{2}\sigma_+\sigma_-\rho_{rot}(t)-\frac{1}{2}\rho_{rot} (t)\sigma_+\sigma_-\right) \nonumber \\
		&+ \gamma_0 N\left(\sigma_+\rho_{rot}(t)\sigma_- - \frac{1}{2}\sigma_-\sigma_+\rho_{I}(t) - \frac{1}{2}\rho_{rot}(t)\sigma_-\sigma_+\right) \label{eq:exampGKSL}
	\end{align}
	with projector \eqref{eq:linExamp}. To reduce this equation to form \eqref{eq:basicDiffEq}, we have to go to the ''interaction representation'' by making the substitution
	\begin{equation}\label{eq:intPicTransform}
		\rho(t) \equiv e^{ -\mathcal{L}_0 t } \rho_{rot}(t),
	\end{equation}
	where
	\begin{equation}\label{eq:freeGen}
		\mathcal{L}_0 \equiv \gamma_0(N + 1) \left(\sigma_- \; \cdot \; \sigma_+ - \frac{1}{2}\{\sigma_+\sigma_-, \; \cdot \; \}\right) + \gamma_0 N \left(\sigma_+ \; \cdot \; \sigma_- - \frac{1}{2}\{\sigma_-\sigma_+, \; \cdot \; \}\right).
	\end{equation}
	As a result, $ \rho(t)$ will satisfy equation \eqref{eq:basicDiffEq}, where
	\begin{equation}\label{eq:intGen}
		\mathcal{L}(t) = e^{-\mathcal{L}_0 t} \mathcal{L}_1 e^{\mathcal{L}_0 t}, \qquad \mathcal{L}_1 \equiv \frac{i\Omega}{2}[\sigma_+ + \sigma_-, \; \cdot \;] .
	\end{equation}
	Then, given the notation $\gamma \equiv \gamma_0 (2 N+1)$, if $t_0 = 0$, equations \eqref{eq:secOrderEqE} take the form
	\begin{equation}\label{eq:exampLinEqBeforSc}
		\frac{d}{d t}E_x(t) = 0, \qquad \frac{d}{dt}E_z(t) = 2 \lambda^2 \frac{\Omega^2}{\gamma} (1 - e^{  \frac{\gamma}{2} t}) E_z(t) + 2 \lambda^2  \frac{\gamma_0 \Omega^2}{\gamma^2} (1 - e^{  \frac{\gamma}{2} t})^2,
	\end{equation}
	where $ E_x(t) = \operatorname{Tr}\sigma_x \rho(t) $, $ E_z(t) = \operatorname{Tr}\sigma_z \rho(t) $. However, the solution of equation \eqref{eq:exampGKSL} can be found exactly \cite[Section 3.4.5]{BreuerPetruccione2007}, although it is quite cumbersome. As a result, using \eqref{eq:intPicTransform}, $E_x(t)$ and $E_z(t)$ can also be computed exactly.  We will denote these exact solutions by $E_x(t)^{\rm exact} $ and $E_z^{\rm exact} (t) $, keeping the notations $E_x(t)$ and $ E_z(t) $ for the solutions of the approximate equation \eqref{eq:exampLinEqBeforSc} with the same initial conditions. Then the direct computation gives:
		\begin{align}
			E_x(t) - E_x^{\rm exact}(t) =&0, \nonumber\\
			E_z(t) - E_z^{\rm exact}(t) =& - \lambda^4\frac{8}{3 \gamma^5} \biggl( 3 \gamma \left(5 + \gamma t - 2 e^{\frac{\gamma}{2}t}(2-\gamma t) -e^{\gamma t}\right) E_z(0) \nonumber\\
			&+ \gamma_0 \left(17 + 3 \gamma t- 9 e^{\frac{\gamma}{2}t}(1- \gamma t) -9 e^{\gamma t}  + e^{\frac{3\gamma}{2}t}\right) \biggr) + o(\lambda^4).
			\label{eq:errorForLinSecOrder}
		\end{align}
		Thus, the error of solutions of the equations \eqref{eq:exampLinEqBeforSc} is of order $O(\lambda^4)$. This often occurs in many common models in open systems theory, when the asymptotic expansion includes only even degrees of $\lambda$ \cite[Section 9.1.1]{BreuerPetruccione2007}.
	
	To distinguish the timescale at which the dynamics becomes consistent with the projector in the theory of open quantum systems one often uses the \cite{Accardi2002, Davies1974} Bogolubov-van Hove scaling $t \rightarrow \lambda^{-2} t$.  Let's introduce $\tilde{ E}_j(t) = E_j(\lambda^{-2} t) $, $j = x, z$, then
	\begin{equation}\label{eq:exampLinEq}
		\frac{d}{d t}\tilde{E}_x(t) = 0, \qquad \frac{d}{dt}\tilde{E}_z(t) = 2  \frac{\Omega^2}{\gamma} (1 - e^{  \frac{\gamma}{2} \frac{t}{\lambda^2}}) \tilde{ E}_z(t) + 2  \frac{\gamma_0 \Omega^2}{\gamma^2} (1 - e^{  \frac{\gamma}{2} \frac{t}{\lambda^2}})^2.
	\end{equation}
	Generally, in the theory of open quantum systems after the Bogolubov-van Hove scaling there are rapidly oscillating or decaying at $\lambda \rightarrow 0$ terms  in the coefficients of the second order equation \cite{Trushechkin2021, Teretenkov2021}. Here, on the contrary, the terms $e^{ \frac{\gamma}{2} \frac{t}{\lambda^2}}$ are exponentially increasing. However, formally assume that $\gamma <0$. Since $\gamma$ is proportional to the coupling constant of the open system to the environment, and becomes negative when this coupling constant is replaced by an imaginary one, we will call such a procedure a dissipative Wick rotation. Then at $\lambda \rightarrow 0$ the terms $e^{ \frac{\gamma}{2} \frac{t}{\lambda^2}}$ can be omitted and equations \eqref{eq:exampLinEq} take the form: 
	\begin{equation}\label{eq:exampLinEqLimit}
		\frac{d}{d t}\varepsilon_x(t) = 0, \qquad \frac{d}{dt}\varepsilon_z(t) = 2  \frac{\Omega^2}{\gamma} \varepsilon_z(t) + 2  \frac{\gamma_0 \Omega^2}{\gamma^2}.
	\end{equation}
	The possibility of such a transition in the case of growing exponents does not seem plausible. As mentioned above, the solution of equation \eqref{eq:exampGKSL} can be found exactly. Equations \eqref{eq:exampLinEqLimit} are also can be solved explicitly. A direct calculation of the limit in the exact solution of the equation \eqref{eq:exampGKSL} taking into account \eqref{eq:intPicTransform} gives
	\begin{equation*}
		\lim\limits_{\lambda \rightarrow + 0}\Tr \vec{P}\rho(\lambda^{-2}t) = \vec{\varepsilon}(t) .
	\end{equation*}
	We emphasise that we do not currently know how to justify discarding exponentially growing terms in the coefficients of a second order equation. However, a direct test in this particular example shows that such a discarding can be done in this case. This observation may be important for the theory of open quantum systems in general since in the standard formulation the existence of the Bogolubov-van Hove limit and its corrections are based \cite{Teretenkov2021, Trushechkin2021Long} on the attenuation of correlation functions of the reservoir. And this example gives some hope that the dissipative Wick rotation will allow one to compute a result that can be rigorously justified. Note also in the case of second order integrodifferential master equations such a dissipative Wick rotation allows one to ''collapse'' the kernel of the integral part of the equation into a delta function, in the situation when this kernel is not damped at large times.

	Now consider the high-temperature equivalent of generator \eqref{eq:freeGen}
	\begin{equation*}
		\mathcal{L}_0 \equiv \frac{\gamma}{2} \left(\sigma_- \; \cdot \; \sigma_+ - \frac{1}{2}\{\sigma_+\sigma_-, \; \cdot \; \}\right) + \frac{\gamma}{2}\left(\sigma_+ \; \cdot \; \sigma_- - \frac{1}{2}\{\sigma_-\sigma_+, \; \cdot \; \}\right).
	\end{equation*}
	with the same $\mathcal{L}_1 $ in  formula \eqref{eq:intGen} and the projector corresponding to  \eqref{eq:nonLinAns} in which we choose an arbitrary $ f(E)$ --- we will see that it will still not contribute to the second order equation, and $g(E) = \alpha \sqrt{E}$, where $\alpha$ is a real constant. This choice for the function $g(E)$ is natural since this relation between the components of the Bloch vector is fulfilled in the case of free dynamics with generator $\mathcal{L}_0$. Then, taking into account \eqref{eq:nonLinAnsDiff}, we have
	\begin{equation*}
		\Tr \sigma_z \frac{\partial \rho_{ans}(E)}{\partial E} \frac{\partial \rho_{ans}(E)}{\partial E} = 0,
	\end{equation*}
	so the last term in the right-hand side of equation \eqref{eq:secOrderEqE} will zero out. Furthermore, substituting~\eqref{eq:nonLinAns} into the other terms of equation \eqref{eq:secOrderEqE} given $g(E) = \alpha \sqrt{E}$, we get 

		\begin{align*}
			\Tr ( \sigma_z \mathcal{L}(t)\rho_{ans} (\vec{E}) ) &= - \alpha\sqrt{ E} \Omega e^{\frac{\gamma}{2} t},\\
			\Tr \biggl(  \sigma_z  \mathcal{L}(t)  \int_{t_0}^t dt_1 \mathcal{L}(t_1)\rho_{ans} (E)   \biggr) &= - 2 \gamma \frac{\Omega^2}{\gamma^2} \left(e^{\frac{\gamma}{2}t} - 1\right) E.
	\end{align*}
	Thus, equation \eqref{eq:secOrderEqE} will take the form
	\begin{equation}\label{eq:nonLinEqExam}
		\frac{d}{dt}E(t) =  - \lambda \alpha \sqrt{E(t)} \Omega e^{\frac{\gamma}{2} t} - 2 \lambda^2 \gamma \frac{\Omega^2}{\gamma^2} \left(e^{\frac{\gamma}{2}t} - 1\right) E(t).
	\end{equation}
	Given a fixed initial condition $E(0)$, it has two solutions
	\begin{align}
		E(t) =&  \exp \left( \frac{4 \lambda^2 \Omega^2}{\gamma^2} \left(e^{\frac{\gamma}{2}t} - 1- \frac{\gamma}{2} t\right)\right) \times \nonumber\\
		&\times \left(\sqrt{E(0)} \pm  \lambda\frac{\alpha \Omega }{2} \int_0^t d\tau \exp \left( \frac{\gamma}{2} \tau- \frac{2 \lambda^2 \Omega^2}{\gamma^2}\left(e^{ \frac{\gamma}{2} \tau} -1 -\frac{\gamma}{2} \tau\right)\right) d\tau\right)^2. \label{eq:nonLinSol}
	\end{align}
	This is the simplest example demonstrating the mechanism of nonlinear dynamics of averages for relevant observables in the theory of open quantum systems. However, we think it can be useful as a test problem for the development of approaches to more complex situations.
	
	Separately, we note that the fact that the function $f$ is not included in equation \eqref{eq:nonLinEqExam} de\-mon\-strates that the structure $\mathcal{L}(t)$ can exclude some ansatz parameters. And, as a result, different ansatz can lead to the same equations for the averages from relevant observables.
	
		Similar to what we did for equation \eqref{eq:exampLinEqBeforSc}, let us analyze the main order accuracy of $\lambda$ in equation \eqref{eq:nonLinEqExam}. At $\alpha \neq 0$ the leading order is the first order and \eqref{eq:nonLinEqExam} takes the form 
		\begin{equation}\label{eq:nonLinEqExamFirstOrder}
			\frac{d}{dt}E(t) =  - \lambda \alpha \sqrt{E(t)} \Omega e^{\frac{\gamma}{2} t}.
		\end{equation}
		Then solution \eqref{eq:nonLinSol} is simplified and takes the explicit form
		\begin{equation*}
			E(t) = E_{\pm}(t)   = \left(\sqrt{E(0)} \pm \lambda \frac{\alpha \Omega}{\gamma} (e^{\frac{\gamma}{2} t} -1)\right)^2.
		\end{equation*} 
		Using again the exact solution of \eqref{eq:exampGKSL} at $\gamma_0 = 0$, we have
		\begin{equation*}
			E_-(t) - E^{\rm exact}(t) = \lambda^2 \frac{\Omega^2}{\gamma^2} \left(\alpha^2 (e^{\frac{\gamma}{2} t} -1)^2-2(2 + \gamma t - 2e^{\frac{\gamma}{2} t})E(0)\right) + o(\lambda^2).
		\end{equation*}
		Thus, the error of the solution $E_-(t)$ of equation \eqref{eq:nonLinEqExamFirstOrder} is of order $O(\lambda^2)$, which is natural to expect if we wrote equation \eqref{eq:nonLinEqExam} neglecting the $O(\lambda^2)$ terms. However, we note that for $E_+(t)$ this is not satisfied
		\begin{equation*}
			E_+(t) - E^{\rm exact}(t) = E_+(t) - E_-(t) + o(\lambda^2)= O(\lambda).
		\end{equation*}
		In fact, $E_+(t)$ is a side solution of equation \eqref{eq:nonLinEqExamFirstOrder}. In general, since initial equations \eqref{eq:basicDiffEq} are linear, the dynamics of the mean values of the relevant observables $\vec{E}(t)$ is uniquely determined by the initial conditions. Therefore, if approximate nonlinear equations \eqref{eq:secOrderEq} have several solutions different within a given asymptotic accuracy, only one of them can be a correct asymptotic approximation of the exact solution.
		
		At $\alpha = 0$ the main order is the second order and equation \eqref{eq:nonLinEqExam} in the main order takes the following form
		\begin{equation*}
			\frac{d}{dt}E(t) =- 2 \lambda^2 \gamma \frac{\Omega^2}{\gamma^2} \left(e^{\frac{\gamma}{2}t} - 1\right) E(t).
		\end{equation*}
		This equation is also solved explicitly. As a result, we have
		\begin{equation*}
			E(t) - E^{\rm exact}(t) = - \lambda^4\frac{8}{\gamma^4} \left(5 + \gamma t - 2 e^{\frac{\gamma}{2}t}(2-\gamma t) -e^{\gamma t}\right) E(0),
		\end{equation*}
		which is the same as \eqref{eq:errorForLinSecOrder} if we put $\gamma_0 = 0$.
		
		Thus, at least in the examples considered, approximate equations have solutions whose error is limited by the terms we neglect in the equation. But in the case of nonlinear equations, only one of the solutions may have this property. However, obtaining any general estimates of accuracy is beyond the scope of this study.
	
	\section{Conclusion}
	
	In this paper, we have obtained perturbative expansions for the coefficients of linear kinetic equations local in time. We introduced a generalized Kawasaki-Ganton projector for an arbitrary ansatz parameterized by the averages of relevant observables and obtained a second-order equation for such a projector. We have considered and discussed a few examples of the introduced projectors and the derived equations. Note that in order to obtain higher order equations, the methods developed in this paper can be combined with other approaches known in open systems theory. In particular, in addition to the obtained form for expansion coefficients \eqref{eq:expansionOfK}, \eqref{eq:expansionOfI}, their expressions in terms of cumulants analogous to the Kubo-van Kampen cumulants \cite{Kubo1963, VanKampen1974I, VanKampen1974II}, their representation in a form analogous to \cite[Theorem 2]{Karasev2023}, as well as recurrence expressions for them analogous to those obtained in \cite{Gasbarri2018} may be of interest.
	
	Among the directions of further development, one of the most natural is the consideration of Gaussian ansatzes parameterized by first and second moments. In the case of a finite number of fermionic modes our results are directly applicable, in the bosonic case a generalization of these results to infinite dimensional operators is required, but at the physical level of rigor our results are also applicable. Irreversible quantum dynamics preserving Gaussian states has been well studied by \cite{Dodonov1986, Heinosaari2010, Teretenkov2019}, so it is natural enough in the case of perturbation of such dynamics to seek corrections in the assumption of Gaussian ansatz. In this case, equations \eqref{eq:secOrderEq} will map Gaussian states into Gaussian states, but will be nonlinear, so the corresponding evolutionary maps will be nonlinear generalizations of linear maps \cite{Palma2015} known in the literature, in particular, Gaussian channels \cite[Section 12.4]{Holevo2010}.
	
	By observables we meant everywhere self-adjoint matrices, but, probably, it is natural to abandon this, parameterizing the ansatzes by averages of generalized observables \cite[Section 2.1]{Holevo2003}, which will lead to some generally nonlinear analogue of quasi-probabilistic re\-pre\-sen\-tations \cite{Yashin2020}.
	
	It is also interesting to consider situations when the ansatz uses density matrices of smaller dimensionality as a parameter. In the case of nonlinear density matrix dependence an important issue is the comparison of equations of form \eqref{eq:secOrderEq} with nonlinear quantum equations, such as quantum nonlinear Boltzmann equations \cite[Section 3.7.1]{BreuerPetruccione2007}. Linear ansatzes of this kind have been studied in the literature; moreover, cases where the ansatz itself depends on a small parameter \cite{Trushechkin2021Der} have been considered. The development of our approach in this direction looks promising. This development, in particular, may be important for the derivation of effective generators \cite{Trubilko2019, Trubilko2020, Basharov2021, Teretenkov2022Eff} by projection methods and for obtaining their nonlinear analogue.

	The authors are grateful to G. Gasbarri, A.Yu. Karasev, E.O. Kiktenko, A.M. Savchenko, R.~Singh, and A.S. Trushechkin for discussion of the problems discussed in the paper. The authors are grateful to the reviewer for their valuable comments, which allowed us to significantly improve the text of the paper.


\begin{thebibliography}{99}
		
		\bibitem{Nakajima1958}
		\textit{Nakajima~S.}  On Quantum Theory of Transport Phenomena: Steady Diffusion // Progress of Theor. Phys. 1958. V 20, N 6. P. 948--959.
		
		\bibitem{Zwanzig1960}
		\textit{Zwanzig~R.} Ensemble Method in the Theory of Irreversibility // J. Chem. Phys. 1960. V. 33, N 5. P. 1338--1341.	
		
		\bibitem{Argyres1964} 
		\textit{Argyres~P.N., Kelley~P.L.} Theory of spin resonance and relaxation // Phys. Rev. 1964. V. 134, N 1A. P. A98.
		
		\bibitem{Breuer2006}
		\textit{Breuer~H.-P.,~Gemmer J., Michel~M.} Non-Markovian quantum dynamics: Correlated projection superoperators and Hilbert space averaging // Phys. Rev. E. 2006. V. 73, N 1. P. 016139.
		
		\bibitem{Breuer2007}
		\textit{Breuer~H.P.}  Non-Markovian Generalization of the Lindblad Theory of Open Quantum Systems // Phys. Rev. A. 2007. V. 75, N 2. P. 022103.
		
		\bibitem{Trushechkin2019}
		\textit{Trushechkin~A.} Calculation of Coherences in Förster and Modified Redfield Theories of Excitation Energy Transfer // J. Chem. Phys. 2019. V. 151, N 7. P. 074101.
		
		\bibitem{Zubarev1997}
		\textit{Zubarev~D., Morozov~V.G., Röpke~G.} Statistical Mechanics of Nonequilibrium Processes. Berlin: Akademie Verlag, 1997.
		
		\bibitem{Rau1996}
		\textit{Rau~J., Müller~B.} From Reversible Quantum Microdynamics to Irreversible Quantum Transport // Phys. Rep. 1996. V. 272, N 1. P. 1--59.
		
		\bibitem{Mori1965} 
		\textit{Mori~H.} A Continued-Fraction Representation of the Time-Correlation Functions // Progress of Theor. Phys. 1965. V. 34, N 3. P. 399--416.
		
		\bibitem{Kawasaki1973} 
		\textit{Kawasaki~K., Gunton~J.D.} Theory of Nonlinear Transport Processes: Nonlinear Shear Viscosity and Normal Stress Effects // Phys. Rev. A. 1973. V. 8, N 4.
		
		\bibitem{Robertson1966} 
		\textit{Robertson~B.} Equations of Motion in Nonequilibrium Statistical Mechanics // Phys. Rev. 1966. V. 144, N 1. P. 151.
		
		\bibitem{Bouchard2007}
		\textit{Bouchard~L.S.} Mori-Zwanzig Equations with Time-Dependent Liouvillian // arXiv preprint arXiv:0709.1358. 2007.
		
		\bibitem{Semin2015}
		\textit{Semin~V., Petruccione~F.} Projection Operators in the Theory of Open Quantum Systems // Proceedings of SAIP. 2015. Pp. 539--544.
		
		
		\bibitem{Semin2020}
		\textit{Semin~V., Petruccione~F.} Dynamical and Thermodynamical Approaches to Open Quantum Systems // Sci. Rep. 2020. V. 10, N 1. P. 2607.
		
		\bibitem{Seke1980}
		\textit{Seke~J.} Equations of Motion in Nonequilibrium Statistical Mechanics of Open Systems // Phys. Rev. A. 1980. V. 21, N 6. P. 2156--2165.
		
		\bibitem{Fick1990}
		\textit{Fick~E., Sauermann~G.} The Quantum Statistics of Dynamic Processes. Vol. 86. Berlin: Springer-Verlag, 1990.
		
		\bibitem{BreuerPetruccione2007}
		\textit{Breuer~H.-P., Petruccione~F.} The Theory of Open Quantum Systems. Oxford: Oxford University Press, 2007.
		
		\bibitem{Teretenkov2019Non}
		\textit{Teretenkov~A.E.} Non-Markovian Evolution of Multi-Level System Interacting with Several Reservoirs: Exact and Approximate // Lobachevskii J. Math. 2019. V. 40, N 10. P. 1587--1605.
		
		\bibitem{Shibata1977}
		\textit{Shibata~F., Takahashi~Y., Hashitsume~N.} A Generalized Stochastic Liouville Equation: Non-Markovian versus Memoryless Master Equations // J. Stat. Phys. 1977. V. 17. P. 171.
		
		\bibitem{Breuer1999}
		\textit{Breuer~H.-P., Kappler~B., Petruccione~F.} Stochastic Wave-Function Method for Non-Markovian Quantum Master Equations // Phys. Rev. A. 1999. V. 59, N 2. P. 1633--1643.
		
		\bibitem{Breuer2001}
		\textit{Breuer~H.-P., Kappler~B., Petruccione~F.} The Time-Convolutionless Projection Operator Technique in the Quantum Theory of Dissipation and Decoherence // Ann. Phys. 2001. V. 291, N 1. P. 36
		
		\bibitem{Bogoliubov1946}
		\textit{Bogoliubov N. N.} Problems of dynamical theory in statistical physics. Moscow: Gostekhisdat, 1946 [in Russian].
		
		\bibitem{VanHove1954} 
		\textit{Van Hove~L.} Quantum-Mechanical Perturbations Giving Rise to a Statistical Transport Equation // Physica. 1954. V. 21, N 1-5. P. 517--540.
		
		\bibitem{Nestmann2019} 
		\textit{Nestmann~K., Timm~C.} Time-Convolutionless Master Equation: Perturbative Expansions to Arbitrary Order and Application to Quantum Dots // arXiv preprint arXiv:1903.05132. 2019.
		
		\bibitem{Teretenkov2022}
		\textit{Teretenkov~A.E.} Effective Gibbs State for Averaged Observables // Entropy. 2022. V. 24, N 8. P. 1144.
		
		\bibitem{Karasev2023}
		\textit{Karasev~A.Yu, Teretenkov~A.E.} Time-Convolutionless Master Equations for Composite Open Quantum Systems // arXiv preprint arXiv:2304.08627. 2023.
		
		\bibitem{Coddington1955}
		\textit{Coddington~E.A., Levinson~N.} Theory of Ordinary Differential Equations. New York: McGraw-Hill, 1955.
		
		\bibitem{Kato2004}
		\textit{Kato~A.} On Reduced Dynamics of Quantum-Thermodynamical Systems. Diss. Berlin, Techn. Univ., Diss. 2004.
		
		\bibitem{Kato2000}
		\textit{Kato~A., Kaufmann~M., Muschik~W., Schirrmeister~D.} Different Dynamics and Entropy Rates in Quantum-Thermodynamics // J. Non-Equilib. Thermodyn. 2000. V. 25, N 1. P. 63--86.
		
		\bibitem{Baake2011}
		\textit{Baake~M., Schlaegel~U.} The Peano-Baker Series // Proc. Steklov Ins. Math, 2011. V. 275. P. 155--159.
		
		\bibitem{Los2022}
		\textit{Los~V.F.} Time-Dependent Projection Operator and Nonlinear Generalized Master Equations // Phys. Rev. E. 2022. V. 106, N 3. P. 034107.
		
		\bibitem{Lokutsievskiy2021}
		\textit{Lokutsievskiy~L., Pechen~A.} Reachable Sets for Two-Level Open Quantum Systems Driven by Coherent and Incoherent Controls // J. Phys. A: Math. Theor. 2021. V. 54. P. 395304.
		
		\bibitem{Morzhin2021}
		\textit{Morzhin~O.V., Pechen~A.N.} Numerical Estimation of Reachable and Controllability Sets for a Two-Level Open Quantum System Driven by Coherent and Incoherent Controls // AIP Conf. Proc. 2021. V. 2362. P. 060003.
		
		\bibitem{Petruhanov2023}
		\textit{Petruhanov~V.N., Pechen~A.N.} Quantum Gate Generation in Two-Level Open Quantum Systems by Coherent and Incoherent Photons Found with Gradient Search // Photonics. 2023. V. 10, N 2. P. 220.
		
		
		\bibitem{Accardi2006}
		\textit{Accardi~L., Kozyrev~S.V., Pechen~A.N.} Coherent Quantum Control of $\Lambda$-Atoms through the Stochastic Limit / Quantum Information and Computing, QP-PQ: Quantum Probab. White Noise Anal. 19, eds. Accardi~L., Ohya~M., Watanabe~N. World Scientific, Hackensack, NJ, 2006. P. 1-17.
		
		\bibitem{Szczygielski2013}
		\textit{Szczygielski~K., Gelbwaser-Klimovsky~D., Alicki~R.} Markovian Master Equation and Thermodynamics of a Two-Level System in a Strong Laser Field // Phys. Rev. E. 2013. V. 87, N 1. P. 012120.
		
		\bibitem{Mori2023}
		\textit{Mori~T.} Floquet States in Open Quantum Systems // Annu. Rev. Condens. Matter Phys. 2023. V. 14. P. 35--56.
		
		\bibitem{Jaynes1957} 
		\textit{Jaynes~E.T.} Information Theory and Statistical Mechanics. II // Phys. Rev. 1957. V. 108, N 2. P. 171--190.
		
		\bibitem{Mingo2018}
		\textit{Hinds Mingo~E., Guryanova~Y., Faist~P., Jennings~D.} Quantum Thermodynamics with Multiple Conserved Quantities / Thermodynamics in the Quantum Regime: Fundamental Aspects and New Directions. 2018.
		
		\bibitem{Baskirove2006}
		\textit{Bashkirov~A.G.}  Renyi entropy as a statistical entropy for complex systems // Theoret. and Math. Phys. 2006. V. 149, N 2. P. 1559--1573.
		
		\bibitem{Bakiev2020}
		\textit{Bakiev T.N., Nakashidze D.V., Savchenko  A.M.}
		Certain Relations in Statistical Physics Based on Rényi Entropy // Moscow University Physics Bulletin. 2020. V. 75, N 6. P. 559--569.
		
		\bibitem{Holevo2018}
		\textit{Holevo~A.S.}  Mathematical Foundations of Quantum Informatics / Lecture courses of the Scientific and Educational Center of Steklov Institute 30, Moscow: Steklov Institute of RAS,  2018 [in Russian].
		
		\bibitem{Alicki2007}
		\textit{Alicki~R., Lendi~K.} Quantum Dynamical Semigroups and Applications. Berlin New York: Springer-Verlag, 2007.
		
		\bibitem{Accardi2002}
		\textit{Accardi~L., Lu~Y.G., Volovich~I.} Quantum Theory and Its Stochastic Limit. Berlin: Springer, 2002.
		
		\bibitem{Davies1974} 
		\textit{Davies~E.B.} Markovian Master Equations // Commun. Math. Phys. 1974. V. 39, N 2. P. 91--110.
		
		\bibitem{Trushechkin2021} 
		\textit{Trushechkin~A.} Unified Gorini-Kossakowski-Lindblad-Sudarshan Quantum Master Equation Beyond the Secular Approximation // Phys. Rev. A. 2021. V. 103, N 6. P. 062226.
		
		\bibitem{Teretenkov2021} 
		\textit{Teretenkov~A.E.} Non-perturbative Effects in Corrections to Quantum Master Equations Arising in Bogolubov-Van Hove Limit // J. Phys. A. 2021. V. 54, N 26. P. 265302.
		
		\bibitem{Trushechkin2021Long} 
		\textit{Teretenkov~A.E.} Long-time Markovianity of Multi-Level Systems in the Rotating Wave Approximation // Lobachevskii J. Math. 2021. V. 42, N 10. P. 2455--2465.
		
		\bibitem{Kubo1963}
		\textit{Kubo~R.} Stochastic Liouville Equations // J. Math. Phys. 1963. V. 4, N 2. P. 174--183.
		
		\bibitem{VanKampen1974I}
		\textit{Van Kampen N.G.} A cumulant expansion for stochastic linear differential equations.
		I // Physica 1974. V. 74, N 2. P. 215--238.
		
		\bibitem{VanKampen1974II}
		\textit{Van Kampen N.G.} A cumulant expansion for stochastic linear differential equations.
		II //  Physica 1974. V. 74, N 2. P. 239--247. 
		
		\bibitem{Gasbarri2018}
		\textit{Gasbarri~G.,  Ferialdi L.} Recursive Approach for Non-Markovian Time-Cnvolutionless Master Equations // Phys. Rev. A. 2018. V. 97, N 2. P. 022114.
		
		
		\bibitem{Dodonov1986}
		\textit{Dodonov V.V., Man'ko  V.I.} Evolution equations for the density matrices of linear open
		systems, in Classical and Quantum Effects in Electrodynamics / Proc. Lebedev Phys. Inst., Vol.
		176 (A. A Komar, ed.), pp. 53–60 (Nova Science, Commack, 1988).
		
		\bibitem{Heinosaari2010}
		\textit{Heinosaari~T., Holevo~A.S., Wolf~M.M.} The Semigroup Structure of Gaussian Channels // Quantum Inf. Comput. 2010. V. 10, N 7-8. P. 619--635.
		
		\bibitem{Teretenkov2019}
		\textit{Teretenkov~A.E.} Irreversible Quantum Evolution with Quadratic Generator: Review // Infin. Dimens. Anal. Quantum Probab. Relat. Top. 2019. V. 22, N 4. P. 19300019.
		
		\bibitem{Palma2015}
		\textit{De Palma~G., Mari~A., Giovannetti~V., Holevo~A.S.} Normal Form Decomposition for Gaussian-to-Gaussian Superoperators // J. Math. Phys. 2015. V. 56, N 5. P. 052202.
		
		\bibitem{Holevo2010}
		\textit{Holevo A.S.} Quantum systems, Channels, Information: a mathematical introduction. Berlin/Boston: Walter de Gruyter GmbH \& Co KG, 2019.
		
		\bibitem{Holevo2003}
		\textit{Holevo A.S.} Statistical structure of quantum theory. Berlin:Springer, 2003.
		
		
		\bibitem{Yashin2020}
		\textit{Yashin~V.I., Kiktenko~E.O., Mastiukova~A.S., Fedorov~A.K.} Minimal Informationally Complete Measurements for Probability Representation of Quantum Dynamics // New Journal of Physics. 2020. V. 22, N. 10. P. 103026.
		
		\bibitem{Trushechkin2021Der} 
		\textit{Trushechkin A. S.}, Derivation of the Redfield Quantum Master Equation and Corrections to It by the Bogoliubov Method // Proc. Steklov Inst. Math. 2021. V. 313. P. 246--257.
		
		\bibitem{Trubilko2019}
		\textit{Trubilko~A.I., Basharov~A.M.} The Effective Hamiltonian Method in the Thermodynamics of Two Resonantly Interacting Quantum Oscillators // JETP. 2019. V. 129. P. 339--348.
		
		\bibitem{Trubilko2020}
		\textit{Trubilko~A.I., Basharov~A.M.} Hierarchy of Times of Open Optical Quantum Systems and the Role of the Effective Hamiltonian in the White Noise Approximation // JETP Letters. 2020. V. 111. P. 532-538.
		
		\bibitem{Basharov2021}
		\textit{Basharov~A.M.} The Effective Hamiltonian as a Necessary Basis of the Open Quantum Optical System Theory // J. Phys.: Conf. Ser. 2021. V. 1890. P. 012001.
		
		\bibitem{Teretenkov2022Eff}
		\textit{Teretenkov~A.E.} Effective Heisenberg Equations for Quadratic Hamiltonians // Int. J. Mod. Phys. A. 2022. V. 37, N 20-21. P. 243020.
		
		
		
	\end{thebibliography}
\end{document}